\documentclass[11pt]{article}
\usepackage[T1]{fontenc}
\usepackage{tgpagella}

\usepackage{a4}
\usepackage{array}
\usepackage{tabularx}
\usepackage{graphicx}
\usepackage{amsmath,amssymb,amsthm,mathtools}
\usepackage{paralist}
\usepackage{bm}
\usepackage{bbm}
\usepackage{xspace}
\usepackage{url}
\usepackage{fullpage, prettyref}
\usepackage{boxedminipage}
\usepackage{wrapfig}
\usepackage{ifthen}
\usepackage{color}
\usepackage{xcolor}
\usepackage{framed}
\usepackage[pagebackref,letterpaper=true,colorlinks=true,pdfpagemode=none,urlcolor=blue,linkcolor=blue,citecolor=violet,pdfstartview=FitH]{hyperref}
\usepackage[nameinlink]{cleveref}
\usepackage{fullpage}
\usepackage{esvect}
\usepackage{thmtools}
\usepackage{thm-restate}

\newtheorem{theorem}{Theorem}[section]

\newtheorem{lemma}[theorem]{Lemma}
\newtheorem{claim}[theorem]{Claim}
\newtheorem{corollary}[theorem]{Corollary}
\newtheorem{definition}[theorem]{Definition}

\newtheorem{observation}[theorem]{Observation}
\newtheorem{remark}[theorem]{Remark}

\newcommand{\ignore}[1]{}

\newcommand{\Sec}[1]{\hyperref[sec:#1]{\Cref*{sec:#1}}} 
\newcommand{\Eqn}[1]{\hyperref[eq:#1]{(\ref*{eq:#1})}} 
\newcommand{\Fig}[1]{\hyperref[fig:#1]{Fig.\,\ref*{fig:#1}}} 
\newcommand{\Tab}[1]{\hyperref[tab:#1]{Tab.\,\ref*{tab:#1}}} 
\newcommand{\Thm}[1]{\hyperref[thm:#1]{Theorem\,\ref*{thm:#1}}} 
\newcommand{\Fact}[1]{\hyperref[fact:#1]{Fact\,\ref*{fact:#1}}} 
\newcommand{\Lem}[1]{\hyperref[lem:#1]{Lemma\,\ref*{lem:#1}}} 
\newcommand{\Prop}[1]{\hyperref[prop:#1]{Prop.~\ref*{prop:#1}}} 
\newcommand{\Cor}[1]{\hyperref[cor:#1]{Corollary~\ref*{cor:#1}}} 
\newcommand{\Conj}[1]{\hyperref[conj:#1]{Conjecture~\ref*{conj:#1}}} 
\newcommand{\Def}[1]{\hyperref[def:#1]{Definition~\ref*{def:#1}}} 
\newcommand{\Alg}[1]{\hyperref[alg:#1]{Alg.~\ref*{alg:#1}}} 
\newcommand{\Obs}[1]{\hyperref[obs:#1]{Obs.~\ref*{obs:#1}}} 
\newcommand{\Ex}[1]{\hyperref[ex:#1]{Ex.~\ref*{ex:#1}}} 
\newcommand{\Clm}[1]{\hyperref[clm:#1]{Claim~\ref*{clm:#1}}} 
\newcommand{\Step}[1]{\hyperref[step:#1]{Step~\ref*{step:#1}}} 

\usepackage[utf8]{inputenc} 
\usepackage[T1]{fontenc}    
\usepackage{hyperref}       
\usepackage{url}            
\usepackage{amsfonts}       

\usepackage{comment}

\usepackage[ruled,linesnumbered]{algorithm2e}
\usepackage{algorithmic}

\hypersetup{linktocpage = true, linkcolor = blue}

\newcommand{\R}{\mathbb{R}}
\newcommand{\NN}{\mathsf{NN}}
\newcommand{\vol}{\operatorname{vol}}

\newcommand{\Bopen}{B}
\newcommand{\aff}{\operatorname{aff}}

\newcommand{\proj}{\operatorname{proj}}
\newcommand{\Sph}{\mathbb{S}}

\DeclareMathOperator*{\argmin}{arg\,min}

\usepackage[framemethod=TikZ]{mdframed}

\newmdenv[
  roundcorner=10pt,
  linewidth=0pt,          
  backgroundcolor=orange!10,
  innertopmargin=2pt,     
  innerbottommargin=10pt, 
  innerleftmargin=10pt,
  innerrightmargin=10pt
]{thmbox}

\newmdenv[
  roundcorner=10pt,
  linewidth=0pt,          
  backgroundcolor=green!10,
  innertopmargin=2pt,     
  innerbottommargin=10pt, 
  innerleftmargin=10pt,
  innerrightmargin=10pt
]{lemmabox}

\newmdenv[
  roundcorner=10pt,
  linewidth=0pt,          
  backgroundcolor=blue!10,
  innertopmargin=2pt,     
  innerbottommargin=10pt, 
  innerleftmargin=10pt,
  innerrightmargin=10pt
]{defbox}

\newenvironment{proofof}[1]{{\bf Proof of #1.  }}{\hfill$\Box$}

\usetikzlibrary{decorations.markings}
\usepackage{floatrow}

\title{Learning Nearest-Neighbor Maps from Adaptive Queries\footnote{This work began while both authors were affiliated with University of California, San Diego and the EnCORE Institute, funded by NSF TRIPODS Institute grant 2217058.}}

\author{
Hadley Black\thanks{CUNY Baruch College. {\tt hadley.black@baruch.cuny.edu}.} \\  
\and 
Geelon So\thanks{ETH Z\"urich. {\tt geelon.so@inf.ethz.ch}.} \\
}
\date{August 8, 2026}
\begin{document}

\maketitle
\thispagestyle{empty}

\begin{abstract}
We study the problem of learning nearest-neighbor maps from adaptive queries, which is equivalent to the following problem of reconstructing a hidden set $H$ via a nearest-neighbor query oracle. Let $K \subset \mathbb{R}^d$ be a compact domain in a normed space $(\R^d,\| \cdot\|)$ and let $H \subset K$ be a hidden set of $n$ points. Upon querying $q \in K$, the oracle returns some $h \in H$ with minimum distance from $q$. How many queries are required to exactly recover $H$? Previous work has studied this question in specific domains, namely the Boolean hypercube and the $\ell_2$-unit sphere. We generalize previous work and prove the tight worst-case query complexity bound of $\Theta(n\kappa)$, where $\kappa$ is the kissing number of the underlying norm.

In the Euclidean norm, obtaining tight asymptotic bounds on $\kappa$ is a significant open question, although it is known that $\kappa = \exp(\Theta(d))$. Our second set of results shows that an exponential dependence on $d$ is required even in natural Euclidean domains: $\exp(\Omega(d))$ queries are needed in the ball, even when $n=2$, and $n\exp(\Omega(d))$ queries are needed in the cone. 

Lastly, we prove a sharper upper bound in the Euclidean sphere. Here, $d$ can be replaced by $\min(n,d)$ via a dimension reduction preprocessing step. This is a randomized version of a procedure due to Prabhu-Woodruff (ICML 2024) where we improve the query complexity from $O(nd)$ to $O(\min(n,d))$. This reveals a striking contrast between the sphere and the ball: when $n = O(1)$, the sphere admits an $O(1)$ query algorithm, whereas the ball requires $\exp(\Omega(d))$. 
\end{abstract}

\newpage

\clearpage
\pagenumbering{arabic}

\section{Introduction}

We study \emph{learning nearest-neighbor maps from adaptive queries}. Let $K$ be a compact metric space, and let $H \subseteq K$ be a finite set of hidden points. A map $\NN_H : K \to K$ is a nearest-neighbor map of $H$ if, for each $q \in K$, it returns a point in $H$ that minimizes distance to $q$. In this problem, the goal of a learner is to recover any nearest-neighbor map of $H$. To achieve this, the learner may make queries to a nearest-neighbor oracle: given $q \in K$, the oracle returns a nearest neighbor $\NN_H(q)$. Furthermore, the learner may adaptively interact with the oracle over many rounds. We would like to understand the \emph{query complexity} of learning a nearest neighbor map: how many queries does a learner need to compute any nearest-neighbor map of $H$, and how should it select them?

There are a few natural ways to view this problem. In a model of \emph{exact query learning} \cite{DBLP:journals/ml/Angluin87}, a learner would like to identify an underlying target function by adaptively probing it. The problem we study extends the query learning model to the class of nearest-neighbor maps, which is in itself fundamental to many areas of computer science, machine learning, information retrieval, and computational geometry \cite{DBLP:conf/soda/Yianilos93,devroye1996probabilistic,indyk1998approximate,10.1145/116873.116880}. Additionally, the nearest-neighbor map is often treated as a black-box primitive for solving downstream problems in computational geometry, such as clustering, estimating statistics about a point set, simulating a separation oracle, computing range queries, to name a few \cite{DBLP:journals/dcg/Har-PeledKMR16,DBLP:conf/kdd/DalviKMR11,DBLP:journals/geoinformatica/BaeAKNS09,DBLP:journals/tkde/YanGZHZW16}. This black box is precisely the oracle that is available to our learner, and the goal here is to understand how many calls to the black box are needed to, as it were, open it up. 

Finally, there is a particularly elegant and equivalent formulation of the problem, which we will heavily make use of. It follows from the fact that a learner is able to compute a nearest-neighbor map of $H$ if and only if it knows what $H$ is. Thus, the problem is one of \emph{reconstructing the hidden point set $H$ from adaptive nearest-neighbor queries}. Two specific instances of the problem formulated this way were introduced and studied by \cite{PW24}. They focused on when the domain $K$ is the Boolean hypercube with Hamming distance, and when it is the unit sphere with Euclidean distance; both in high dimensions. The motivation comes from viewing the hidden set $H$ as a dictionary of binary or spherical codes, and the nearest-neighbor map as an error-correcting mechanism.

\subsection{Main Results}

In this work, we let the domain $K$ be any compact subset of a $d$-dimensional normed space $(\R^d, \|\cdot\|)$. For any hidden set $H \subseteq K$, the nearest-neighbor map $\NN_H$ of $H$ is defined by
\[\NN_H(q) \in \argmin_{h \in H}\, \|q - h\|\text{.}\]
This setting is reasonably general, and it includes the Boolean hypercube in $(\R^d,\ell_1)$ and the unit sphere in $(\R^d,\ell_2)$ studied in \cite{PW24}. We show that the complexity of any nearest neighbor map depends on fundamental geometric properties of the domain $K$ and the norm $\|\cdot\|$. Our core result characterizes the minimax query complexity of learning nearest-neighbor maps in normed spaces.

\paragraph{A minimax-optimal learner.} It turns out that the \emph{kissing number} controls the worst-case query complexity. The usual definition of this number is the maximum number of unit balls with disjoint interior that can be arranged to simultaneously touch (kiss) a central unit ball. In normed spaces, the kissing number is also called the Hadwiger number \cite{Swanepoel2018}, and has an equivalent definition:

\begin{definition}[Kissing number] \label{def:kissing}
    The \emph{kissing number} of the norm $\|\cdot\|$ is the maximum number of points that can be packed onto the unit sphere with pairwise distance at least one
    \[
        \kappa := \max\big\{ |U|:
        \|u\| = 1
        \ \text{and}\
        \|u-v\| \geq 1
        \text{ for all } u \neq v\in U \big\} \text{.}
    \]
\end{definition}

Intuitively, the kissing number quantifies how many `directions' there are at a fixed scale, and an elementary fact is that it is always upper bounded by $3^d$ \cite{hadwiger1957treffanzahlen}. If $|H|= n$, this work shows that the minimax query complexity of learning a nearest-neighbor map is $n \kappa$, up to factors of 2.

\begin{theorem}[A minimax-optimal learner] \label{thm:characterization}
    Let $\|\cdot\|$ have kissing number $\kappa$. For each $n \geq 2$:
    \begin{itemize}
        \item There is an adaptive, deterministic algorithm which, for any non-empty, compact domain $K \subset \R^d$, learns any unknown set $H \subseteq K$ of size $n$ using at most $2 n \kappa$ nearest-neighbor queries. Moreover, the algorithm does not need to know $n$ beforehand and terminates within $2n$ rounds.
        \item There is a compact domain $K \subset \R^d$ on which any randomized adaptive algorithm that exactly learns any $n$-point hidden set $H \subseteq K$ must make at least $\lfloor n/2 \rfloor \cdot (\kappa - 1)$ queries.
    \end{itemize}
\end{theorem}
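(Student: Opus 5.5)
The plan is to treat the two items separately. The upper bound comes from a greedy "farthest point" exploration whose cost is charged via a packing argument to the kissing number. The lower bound comes from a needle-in-a-haystack gadget built from a kissing configuration.

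\textbf{Upper bound.} I would maintain the set $S\subseteq H$ of points discovered so far. The algorithm starts by querying an arbitrary $q\in K$, which puts one point into $S$. In each round, for every $s\in S$ it considers the Voronoi cell $V_S(s)=\{q\in K: s\in\argmin_{t\in S}\|q-t\|\}$. This set is compact, because $K$ is compact and the cell is cut out by closed conditions. The algorithm queries a point $q_s\in V_S(s)$ maximizing $\|q_s-s\|$, i.e.\ a point of $K$ farthest from $S$ inside that cell. Any returned point not already in $S$ is added to $S$, and the algorithm stops after a round in which nothing new is returned.

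\emph{Correctness.} Suppose some $h\in H\setminus S$ exists. Take $q=h$ itself, which lies in some cell $V_S(s)$. Then $d(q_s,S)\ge d(h,S)>0$, while $d(q_s,H)$ must be compared against this. I would argue that if the oracle answered $q_s$ with a point of $S$, then every $q\in V_S(s)$ has $d(q,S)\le d(q_s,S)=d(q_s,H)$. I then need to show this is incompatible with the existence of $h$ in the cell, possibly after choosing $q_s$ among maximizers in a canonical way. I expect this tie-handling to need some care.

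\emph{Query count.} The key packing lemma is the following. Suppose $p_1,\dots,p_m$ are points with $\|p_i-p_j\|\ge\max(\|p_i-s\|,\|p_j-s\|)$ for all $i\neq j$. Then the normalized directions $(p_i-s)/\|p_i-s\|$ are pairwise at distance $\ge 1$, so $m\le\kappa$. This is the standard normed-space lemma, proved by the triangle inequality after rescaling the shorter vector.

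I would charge each discovered point to the center $s$ whose cell produced it. Then I would show that the points charged to $s$, together with later-found points, satisfy this mutual separation relative to $s$. The reason is that each was at maximal distance from the then-current $S$ within $s$'s cell, and after discovery it becomes a center itself. This bounds the number of productive queries per center by $\kappa$. Adding one wasted (confirming) query per center per round, and noting that each round either adds a point or terminates, gives at most $2n$ rounds and at most $2n\kappa$ queries. Making the separation inequality hold across rounds, rather than only within a single round, is the step I expect to be the main obstacle. My first attempt would be to process one center at a time and to use monotonicity of the farthest distance $d(q_s,S)$ as $S$ grows.

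\textbf{Lower bound.} Fix a kissing configuration $U$ with $|U|=\kappa$. Take $m=\lfloor n/2\rfloor$ centers $c_1,\dots,c_m$ spaced very far apart, and let $K=\bigcup_i(\{c_i\}\cup(c_i+U))$, a finite and hence compact set. If $n$ is odd, add one extra isolated point that is always in $H$. The hidden set contains every $c_i$ together with one secret point $c_i+u_i^\ast$ per gadget.

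Queries in other gadgets are uninformative because the gadgets are far apart. A query at $c_i$ returns $c_i$. A query at $c_i+u$ with $u\neq u_i^\ast$ has $\|u-u_i^\ast\|\ge 1=\|u\|$, so $c_i$ is a valid nearest neighbor, and an adversarial oracle returns it. Hence only querying $c_i+u_i^\ast$ itself is informative.

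An algorithm that is always correct must therefore, on every run, query at least $\kappa-1$ of the candidates $c_i+u$ in each gadget. Otherwise two consistent hypotheses remain. Summing over gadgets gives $\lfloor n/2\rfloor(\kappa-1)$ queries, and this holds for randomized algorithms because exact recovery is required with certainty.
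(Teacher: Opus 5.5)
\textbf{The upper bound has a genuine gap.} It sits exactly at the step you flagged, and tie-breaking cannot fix it: querying one farthest point per cell and halting when nothing new appears is not a correct algorithm. For example, take $K=[-10,11]\subset\R$, $H=\{0,-1\}$, and first query $0$. Then $S=\{0\}$ and $V_S(0)=K$. The unique farthest point is $q_s=11$, which is closer to $0$ than to $-1$, so the oracle answers $0$ and you halt with $S\neq H$. Here $d(-1,S)=1\le 11=d(q_s,H)$, so the inequality you hoped to contradict simply holds.

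The reason is that an answer $s$ to a query $q$ only certifies that the open ball $B(q,\|q-s\|)$ contains no hidden point. For a farthest $q$, this ball touches $s$ and misses everything behind $s$.

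The missing idea is the paper's tangent-ball cover. For the center $s$ being processed:
\begin{itemize}
\item take $q_1$ farthest from $s$ in $V_S(s)$;
\item then take $q_{j+1}$ farthest from $s$ in $R_j=V_S(s)\setminus\bigcup_{i\le j}B(q_i,\|q_i-s\|)$, which is compact because the balls are open;
\item stop when $R_j=\{s\}$.
\end{itemize}
Since $q_j\notin B(q_i,\|q_i-s\|)$ and the $R_j$ are nested, $\|q_j-q_i\|\ge r_i\ge r_j$. This is exactly the hypothesis of your packing lemma, so one cover has at most $\kappa$ points. The lemma should be applied within a single cover, not to discovered points across rounds.

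Querying the whole cover has one of two outcomes:
\begin{itemize}
\item it returns a new point, because any $h\in V_S(s)\setminus\{s\}$ lies in some $B(q,\|q-s\|)$ and is then strictly closer to $q$ than every point of $S$;
\item or it certifies $H\cap V_S(s)=\{s\}$, which stays true as $S$ grows since cells only shrink.
\end{itemize}
Processing one uncertified center per round therefore gives at most $2n$ rounds of at most $\kappa$ queries each. Your scheme of one query for every center in every round could cost $\Theta(n^2)$ queries over $n$ rounds even if it were correct.

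\textbf{The lower bound is essentially the paper's.} It uses the same gadget, a kissing configuration $U$ with $|U|=\kappa$ plus its center with ties broken toward the center, replicated in $\lfloor n/2\rfloor$ far-apart translates. Your extra always-hidden isolated point for odd $n$ is a small improvement.

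The quantifiers need fixing, though: a lucky run can hit $c_i+u_i^\ast$ early, so not every run makes $\kappa-1$ queries per gadget. The correct statement is that, for each random seed, an adversary that answers $c_i$ until only one candidate of gadget $i$ remains unqueried forces $\kappa-1$ distinct candidate queries in every gadget before the algorithm can halt correctly.
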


Interestingly, a closely-related upper bound was previously discovered in passing; an algorithm designed to solve a clustering problem also happens to solve this one \cite[Observation 3.3]{DBLP:journals/dcg/Har-PeledKMR16}. It uses at most $c_d n$ queries where $c_d$ is an unspecified, dimension-dependent constant. \Cref{thm:characterization} pins down this constant, establishing a tight, and somewhat surprising, connection to the kissing number of the norm. It is arguably not obvious that the kissing number, which deals with balls, should be the right quantity for controlling hardness over arbitrary compact sets.  

The algorithm realizing the upper bound (\Alg{Voronoi}) is a simple greedy algorithm, it is completely general, and this result shows that it is essentially optimal against worst-case problem instances. A natural question to then ask is: what happens in more structured settings?

\paragraph{Upper bounds in Euclidean space.} Here, we focus on $d$-dimensional Euclidean space $(\R^d, \ell_2)$. The bounds in \Cref{thm:characterization} depend on the $\ell_2$-kissing number in $\R^d$, which we denote $\kappa_{2,d}$. 

Determining the value of $\kappa_{2,d}$ is a classic geometric problem going back to an argument in 1694 between Isaac Newton and David Gregory \cite{BoyvalenkovDodunekovMusin2012}. For interest, \Cref{table:kissing} gives a few examples where $\kappa_{2,d}$ has been computed. In general, the tightest known asymptotic bounds are
\[2^{(0.2075 + o(1)) d} \leq \kappa_{2,d} \leq 2^{(0.401 + o(1))d},\]
where the upper bound is given by \cite{KabatianskyLevenshtein1978} and the lower bound is by \cite{wyner1965capabilities}.

With these bounds on $\kappa_{2,d}$, it turns out to be quite interesting to compare \Cref{thm:characterization} with the earlier results in \cite{PW24}. In the following, we consider learning an $n$-point hidden set $H$ in $K$.
\begin{itemize}
    \item Let $K$ be the unit $\ell_2$-sphere. Our algorithm uses $O(n2^{0.402d})$ queries, beating the prior query complexity of $O(n^{\lfloor d/2\rfloor})$ by a constant factor in the exponent (Theorem 4.2 of \cite{PW24}).
    \item Let $K$ be the Boolean hypercube $\{0,1\}^d$ embedded in $\ell_2$-space $(\R^d, \ell_2)$.\footnote{Technically, \cite{PW24} considered the Boolean hypercube with the $\ell_1$-distance. However, in the context of computing nearest neighbors, the $\ell_1$ and $\ell_2$ distances are equivalent when restricted to this domain. In particular, the distances are monotonically related since $\|x - y\|_1 = \|x - y\|_2^2$ whenever $x, y \in \{0,1\}^d$.} There is an algorithm specialized to the hypercube that uses $2^{O(\sqrt{d \cdot \log d \cdot \log n})}$ queries (Theorem 3.1 of \cite{PW24}). Thus, it uses significantly fewer queries than our \emph{worst-case} lower bound of $n \cdot 2^{\Omega(d)}$.
\end{itemize}
The second point does not contradict our lower bound as, evidently, the Boolean hypercube does not yield worst-case problem instances. It does point out that the kissing number only controls the \emph{minimax} query complexity, and that certain domains $K$ may exhibit additional structure.

This raises the question: how pathological is the lower bound in \Cref{thm:characterization}? Or, in other words, is learning the nearest-neighbor map easier on more natural domains?

\begin{table}[t] 
\begin{tabular}{lll}
$d$ & $\kappa_{2,d}$ & Reference \\ \hline
1   & 2              &  \emph{trivial}         \\
2   & 6              &  \emph{folklore}         \\
3   & 12             &  \cite{kissing3} \\
4   & 24             &  \cite{kissing4} \\
8   & 240            &   \cite{Lev79, ODLYZKO1979210}        \\
24  & 196560         &  \cite{Lev79, ODLYZKO1979210}  
\caption{\small{Values of $d$ where $\kappa_{2,d}$ has been determined. 
}
\label{table:kissing}
}
\end{tabular}
\end{table}

\paragraph{Lower bounds in Euclidean space.} The next result partially addresses the previous questions. We provide quantitative lower bounds for specific domains---namely the closed $\ell_2$-ball and cone. They arguably show that bounds obtained by the kissing number are fairly tight even in nice domains.

\begin{theorem}[Lower bounds in nice domains] \label{thm:lower-bound-nice-domains}
    Let $n \geq 2$. Suppose that a randomized adaptive algorithm exactly recovers any $n$-point hidden set $H \subset K$ with probability at least $1/10$.
    \begin{itemize}
        \item If $K \subset \R^d$ is the closed $\ell_2$-unit ball, the algorithm makes at least $\Omega\left(n + 2^{0.2075d}\right)$ queries.
        \item If $K \subset \R^d$ is the $\ell_2$-cone $C_d$, then the algorithm makes at least $\Omega\left(n \cdot 2^{0.0156d}\right)$ queries, where
        \[C_d := \big\{(s,v) \in [0,1] \times \R^{d-1} : \|v\|_2 \leq s\big\}.\]
    \end{itemize}
\end{theorem}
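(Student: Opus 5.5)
Both bounds will come from Yao's minimax principle: I will exhibit a distribution over hidden sets $H$ and show that any deterministic adaptive algorithm which succeeds with probability at least $1/10$ over this distribution must make many queries. The basic tool is a ``hiding'' gadget. There is a known point $h_0 \in H$, and a second point $h$ is drawn uniformly from a finite family $P$ of candidates. The family is chosen so that every query $q \in K$ has $\NN_H(q) = h_0$ unless $h$ lies in a small set $S(q) \subseteq P$; ideally $|S(q)| \le 1$, or at least bounded by a constant. The adversary answers $h_0$ whenever it can. Then each query reveals at most $O(1)$ candidates, so $\Omega(|P|)$ queries are needed to find $h$ with constant probability. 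The additive $\Omega(n)$ term for the ball is trivial, since every point of $H$ must be returned by some query, and each query returns only one point.

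\textbf{Ball.} Take $h_0 = 0$, the center, and let the candidates be $h_u = u$ for $u$ ranging over a spherical code $U \subseteq \Sph^{d-1}$ with pairwise angles at least $60^\circ$ and $|U| \ge 2^{0.2075 d}$, which exists by Wyner's bound. A query $q$ in the unit ball returns $h_u$ only if $\|q - u\| < \|q\|$, which means $\langle q, u\rangle > 1/2$. This requires $q \neq 0$ and forces $u$ into a cap around $q/\|q\|$ of angular radius below $60^\circ$.

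Can two code points $u, v$ both satisfy $\langle q,u\rangle > 1/2$ and $\langle q,v\rangle > 1/2$ for some $\|q\| \le 1$? Write $\hat q = q/\|q\|$. Then $\langle \hat q, u\rangle > 1/2$ and $\langle \hat q, v\rangle > 1/2$, so $u+v$ has inner product greater than $1$ with $\hat q$. On the other hand $\|u+v\| = \sqrt{2+2\langle u,v\rangle} \le \sqrt{3}$, which does not rule this out. So the caps may overlap. I will fix this by strengthening the separation. I use a code with angle $\theta$ slightly above $60^\circ$ and put candidates at $h_u = r u$ with $r < 1$ tuned so that each cap $\{q : \|q - ru\| < \|q\|\}\cap B$ is angularly narrow enough to be disjoint from the others. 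Failing that, I accept $|S(q)| \le c$, which holds because the number of code points in a cap of angle below $60^\circ$ is bounded. Any constant loss in the exponent is absorbed, provided the exponent $0.2075$ is preserved. If it is not, I take the code at exactly the threshold of $60^\circ$ and argue that the caps intersect only on measure-zero boundaries, with ties broken toward $h_0$.

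For general $n$, the remaining $n-2$ points are placed in a fixed, known, far region, so they are irrelevant. This gives a lower bound of $\Omega(2^{0.2075d})$. Combined with the trivial $\Omega(n)$, it yields the first claim.

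\textbf{Cone.} Here the goal is $n$ independent gadgets, each costing $2^{\Omega(d)}$ queries. I place about $n/2$ known anchors $a_i$ along the axis of $C_d$ at well-separated heights $s_i$. For each anchor, I hide a partner chosen from a spherical code arranged on the cross-sectional sphere $\{(s_i, v) : \|v\|_2 = c s_i\}$, or from points slightly off-axis. The key geometric step is locality: a query that sees partner $i$ must lie near height $s_i$, so any single query's answers involve $O(1)$ gadgets. To get this, I use geometrically growing heights. The cone's shape forces cross-sections to scale, and with a small code radius $c$ the gadgets decouple. The price is that the code's angular separation must be large relative to the tiny cap, which lowers the exponent to something like $0.0156$. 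That exponent will come from a code with angle near $90^\circ$ or larger, counted via Gilbert–Varshamov-type estimates in dimension $d-1$. Summing over gadgets with the independent-hiding argument (a standard direct-sum argument under Yao) gives $\Omega(n 2^{0.0156 d})$.

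\textbf{Main obstacle.} The hardest part will be the cone geometry: showing that one query cannot cut into caps of several gadgets, or several caps of one gadget, while keeping exponential code size. I will first try explicit inequalities for $\|q-h\|$ versus $\|q-a_i\|$, working in coordinates $(s,v)$, and then tune $c$ and the height ratio.
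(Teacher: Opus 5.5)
Your ball gadget (the origin plus a hidden point on the sphere) matches the paper's. The counting mechanism you build on, however, fails here: you want each query to reveal only $O(1)$ candidates of a separated code. A candidate $ru$ beats the origin at $q$ iff $\langle q,u\rangle>r/2$. So for a unit-norm query $q$, the revealed candidates are exactly the code points in the open cap of angular radius $\arccos(r/2)\ge 60^\circ$ around $q$. Taking $r<1$ widens this cap, and $r>1$ leaves the ball.

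Disjoint, or measure-zero-overlapping, caps need pairwise angles of at least $120^\circ$, which allows at most three code points. For $u,v$ at angle $60^\circ$, every $q$ near $(u+v)/\sqrt3$ has $u,v\in S(q)$. Your fallback is also false: $60^\circ$ separation does not bound the number of code points in a $60^\circ$ cap. Take $\cos\alpha\,e+\sin\alpha\,w_i$ with $\alpha$ slightly below $60^\circ$, $w_i\perp e$ and $\langle w_i,w_j\rangle\le 1/3-\epsilon$. These are pairwise at angle at least $60^\circ$, all lie in the cap around $e$, and there are $2^{\Omega(d)}$ of them.

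The missing idea is to bound the probability mass a query can hit, not the number of candidates. The paper draws $x$ uniformly from all of $\Sph^{d-1}$ and notes that $x$ can be returned at $q$ only if $\langle q,x\rangle/\|q\|\ge 1/2$. It bounds that cap by $c\,(3/4)^{d/2}$ via \Cref{lem:spherical-cap}, then union-bounds over the queries $q_1,\dots,q_M$ the algorithm makes when every answer is $0$. On the complementary event, $x$ is uniform on a set of positive measure and cannot be output exactly. Since $(4/3)^{d/2}=2^{0.2075d}$, this gives the exponent with no code at all. A finite family would only work if you controlled its maximum cap occupancy directly (e.g.\ a random code), not through its minimum distance. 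Also, the unit ball has no ``far region''. The extra $n-2$ points are harmless for a different reason: additional known points only shrink the set of queries at which $x$ beats $0$.

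The cone plan inherits this flaw. The revealing caps there are small in measure, but their angular radius $\arccos\tau$ is close to $90^\circ$, so no separation can make them disjoint. In any case, a code with minimum angle $90^\circ$ or more has at most $2(d-1)$ points. Shrinking the partner radius $c$ only lowers the cap threshold (roughly $c/2$) and hence the exponent, and nothing in your plan produces $0.0156$. Locality is also left open: with anchors only on the axis, the region where a partner beats its anchor is a half-space in the $v$-coordinates spanning a range of heights.

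The paper settles both issues with a four-point gadget on the truncated cone $T_d$ over heights $[1-1/\gamma,\,1+\gamma]$. It uses guards $x_L,x_R$ on the axis at the two ends, $x_O=(1,0)$, and $x_U=(1,u)$ with $u$ uniform on $\Sph^{d-2}$, lying on the cone's surface. The choice $\gamma=1+\sqrt2$ makes $x_L$ or $x_R$ at least as close as $x_U$ to every query in the cone outside $T_d$, so disjoint scaled copies $\alpha_iT_d$ decouple exactly. Inside $T_d$, returning $x_U$ needs $\langle v,u\rangle\ge 1/2$ with $\|v\|\le 1+\gamma$. This is a cap with threshold $\tau=1/(2(1+\gamma))$, and $\tfrac12\log_2\bigl(1/(1-\tau^2)\bigr)\approx 0.0156$ is the stated exponent.
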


The lower bound for the ball is an additive bound of $n + 2^{\Omega(d)}$, where the constant in the exponent is $\frac{1}{2}\log_2(4/3) \approx 0.2075$. This corresponds to the tightest known lower bound of $\kappa_{2,d}$ \cite{wyner1965capabilities}. The lower bound for the cone gives a multiplicative bound $\smash{n \cdot 2^{\Omega(d)}}$. While we did not attempt to optimize the constant in this exponent, this still shows a lower bound in a `nice' domain that is on the same order as the worst-case lower bound $\Omega(n \cdot \kappa_{2,d})$ from \Cref{thm:characterization}. 

This lower bound gives some evidence that our minimax result may be indicative of hardness even for many non-worst-case problem instances. If so, this is unfortunate because a query complexity that is linear in $\kappa$, which is exponential in dimension for Euclidean space, quickly becomes intractable when $d$ grows. This leads us to our next question: are there reasonable settings where a learner, given additional structure, may circumvent the general bound $\Theta(n\cdot \kappa)$?

\paragraph{Learning a rank-deficient set on the sphere.}
For our final result, we let $K = \mathbb{S}^{d-1}$ be the $\ell_2$-unit sphere in $\R^d$. We also assume that the hidden set $H \subset K$ has rank $k = \dim(\mathrm{aff}(H))$, which is the dimension of the affine span of $H$. We do not assume that the learner knows the value of $k$.

This setting is inspired by \cite{PW24}, who observed that when the hidden set $H$ lies on the sphere, it is possible to exactly recover its affine span using $O(kd)$ deterministic nearest-neighbor queries. This subroutine enables a learner to efficiently reduce the dimensionality of the problem to $\min(k,d)$. We contribute a simplified, randomized dimension-reduction subroutine using $O(k)$ nearest-neighbor queries that almost surely recovers the affine span $\mathrm{aff}(H)$. 

\begin{lemma} [Randomized dimension reduction in $\Sph^{d-1}$] \label{lem:dim-reduction} 
    Let $H \subset \Sph^{d-1}$ be a set in the 
    $\ell_2$-unit sphere and with rank $k$. There is a randomized algorithm using $2k+3$ queries that almost surely returns an affine basis $u_1,\ldots,u_{k+1} \in H$. That is, with probability 1, their affine span contains $H$,
    \[H \subseteq \mathrm{aff}(u_1,\ldots, u_{k+1}).\]
    Moreover, the algorithm does not need to know $k$ beforehand and terminates within $k+2$ rounds.
\end{lemma}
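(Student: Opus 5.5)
The plan rests on one observation. On the unit sphere, nearest-neighbor queries are linear maximization queries: for $q,h\in\Sph^{d-1}$ we have $\|q-h\|_2^2 = 2-2\langle q,h\rangle$, so $\NN_H(q)$ is some maximizer of $\langle q,h\rangle$ over $h\in H$. The oracle may break ties adversarially. I would build the affine basis greedily. At each step I maintain affinely independent points $u_1,\ldots,u_j\in H$ with affine span $A_j$, and use two random queries either to find a point of $H\setminus A_j$ or to certify (almost surely) that $H\subseteq A_j$.

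\textbf{Algorithm.} Make one arbitrary query to obtain $u_1\in H$. In each subsequent round, let $W_j=\mathrm{span}(u_2-u_1,\ldots,u_j-u_1)$ be the direction space of $A_j$. If $W_j^\perp=\{0\}$, then $A_j=\R^d\supseteq H$ and we stop. Otherwise, draw $g$ uniformly from the unit sphere of $W_j^\perp$ and query both $g$ and $-g$. If either answer lies outside $A_j$, append it as $u_{j+1}$; it is automatically affinely independent of $u_1,\ldots,u_j$. If both answers lie in $A_j$, stop and output $u_1,\ldots,u_j$.

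\textbf{Correctness.} Every $h\in A_j$ satisfies $h-u_1\in W_j$, so $\langle g,h\rangle=\langle g,u_1\rangle$; thus $\langle \pm g,\cdot\rangle$ is constant on $A_j$. Now suppose some $h\in H\setminus A_j$ exists. Its projection $v_h$ of $h-u_1$ onto $W_j^\perp$ is nonzero, and $\langle g,h-u_1\rangle=\langle g,v_h\rangle$. The event $\langle g,v_h\rangle=0$ is a great subsphere of the sphere of $W_j^\perp$, so it has probability zero. Taking a union over the finitely many such $h$, almost surely $\langle g,v_h\rangle\neq 0$ for all of them.

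Fix such an $h$ and choose the sign $s\in\{\pm1\}$ with $\langle sg,h-u_1\rangle>0$. Then $\max_{h'\in H}\langle sg,h'\rangle>\langle sg,u_1\rangle$. Any maximizer, however ties are broken, therefore has value strictly above the constant value on $A_j$, so it lies outside $A_j$. Hence, almost surely, a round stops only when $H\subseteq A_j$. The output then satisfies $H\subseteq\mathrm{aff}(u_1,\ldots,u_j)$ with the $u_i$ affinely independent, so $j=k+1$ and we have an affine basis. The random draws in different rounds are fresh, and there are finitely many rounds, so the overall failure probability is still zero.

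\textbf{Counting.} The collected points are affinely independent points of $H$, so at most $k+1$ of them are ever found; that is, at most $k$ rounds add a point. The count is then: one initial query, at most $k$ successful rounds of $2$ queries each, and one final certifying round of $2$ queries. This gives $1+2k+2=2k+3$ queries over $k+2$ rounds, and the algorithm never uses $k$. The main point needing care is the robustness to adversarial tie-breaking. It is handled because the comparison with the constant value on $A_j$ is strict, so the oracle has no choice of tie that keeps the answer inside $A_j$.
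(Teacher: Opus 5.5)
Your proposal is correct and matches the paper's proof: query $\pm\bm{w}$ for a uniformly random unit $\bm{w}$ orthogonal to the current affine span, observe that the strict inequality $\langle \sigma \bm{w}, z\rangle > \langle \sigma \bm{w}, x\rangle$ for all $x$ in the span defeats adversarial tie-breaking, and count $1 + 2k + 2$ queries over $k+2$ rounds. The only difference is that you append a single new point per round, whereas the paper adds both returned points to $X$. Your choice keeps the collected set affinely independent, so the output is literally an affine basis without any further extraction step.
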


By first running the subroutine provided by \Cref{lem:dim-reduction}, a learner can first use $2n + 3$ queries to recover $\mathrm{aff}(H)$, reducing the dimensionality of the problem to $k$. Then, by proceeding with the general algorithm \Alg{Voronoi} from \Cref{thm:characterization} within $\mathrm{aff}(H)$, using $2 n \cdot \kappa_{2,k}$ queries. We obtain: 

\begin{corollary} [Specialized algorithm in $\Sph^{d-1}$] \label{cor:sphere-alg} There is a randomized, adaptive algorithm that almost surely learns any $n$-point hidden set $H \subset \Sph^{d-1}$ using $(2n + 3) + 2n \cdot \min\{\kappa_{2,n}, \kappa_{2,d}\}$ nearest-neighbor queries. Moreover, the algorithm does not need to know $n$ beforehand and terminates within $3n + 1$ rounds.
\end{corollary}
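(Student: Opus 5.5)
The plan is to compose \Cref{lem:dim-reduction} with the greedy algorithm \Alg{Voronoi} from \Cref{thm:characterization}, run inside the affine span of $H$. The sequence has three steps.

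\textbf{Step 1: recover the affine span.} We run the subroutine of \Cref{lem:dim-reduction}. Since $|H| = n$, the rank satisfies $k = \dim\aff(H) \le \min(n-1, d)$. The subroutine uses $2k+3 \le 2n+1 \le 2n+3$ queries and at most $k+2 \le n+1$ rounds. With probability $1$ it returns $u_1,\ldots,u_{k+1} \in H$ such that $A := \aff(u_1,\ldots,u_{k+1}) \supseteq H$. It does not need to know $k$ or $n$.

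\textbf{Step 2: reduce to a $k$-dimensional Euclidean instance.} The domain is $K' := \Sph^{d-1} \cap A$. This is a (possibly degenerate) $(k-1)$-sphere inside the $k$-dimensional affine space $A$, so it is a non-empty compact set containing $H$. We identify $A$ isometrically with $(\R^k,\ell_2)$ by picking an orthonormal frame.

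The key point is that the oracle can be simulated on $K'$. For a query $q \in K' \subset \Sph^{d-1}$, the true oracle $\NN_H(q)$ returns a point of $H$ nearest to $q$ in $\R^d$. Since $H \subset A$ and $A$ carries the restricted Euclidean metric, this is also a valid nearest neighbor answer for the instance $(K', H)$ in $(\R^k, \ell_2)$.

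We must also check that \Alg{Voronoi} only issues queries in its own domain $K'$. This holds because \Cref{thm:characterization} is stated for an arbitrary compact domain. Every query therefore lies in $K' \subset \Sph^{d-1}$ and is a legal query to the original oracle.

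\textbf{Step 3: apply \Cref{thm:characterization} and count.} In dimension $k$, \Cref{thm:characterization} recovers $H$ with at most $2n\kappa_{2,k}$ queries in at most $2n$ rounds, again without knowing $n$. The kissing number is monotone in dimension: embed $\R^{k}$ into $\R^{k'}$ for $k \le k'$, and a kissing configuration remains one. Hence $\kappa_{2,k} \le \min\{\kappa_{2,n}, \kappa_{2,d}\}$, using $k \le n-1 < n$ and $k \le d$.

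Summing the two phases gives $(2n+3) + 2n\min\{\kappa_{2,n},\kappa_{2,d}\}$ queries and $(n+1) + 2n = 3n+1$ rounds. Success holds almost surely, because Step 1 succeeds with probability $1$ and Step 2 is deterministic.

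\textbf{Main obstacle.} The only delicate point is the reduction in Step 2: the simulated oracle must be exactly a nearest-neighbor oracle for the lower-dimensional instance, and \Alg{Voronoi}'s queries must stay inside $\Sph^{d-1}$. The fact that \Cref{thm:characterization} holds for an arbitrary compact domain handles both, taking $K' = \Sph^{d-1}\cap A$. Degenerate cases also need a check: when $k=0$, i.e.\ $n=1$, $K'$ is a single point, which is still a non-empty compact domain.
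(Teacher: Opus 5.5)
Your proposal is correct and matches the paper's argument. You run the randomized dimension reduction of \Cref{lem:dim-reduction} to recover $\aff(H)$, then run \Alg{Voronoi} on the compact domain $\Sph^{d-1}\cap\aff(H)$, identified isometrically with $(\R^k,\ell_2)$, and bound $\kappa_{2,k}\le\min\{\kappa_{2,n},\kappa_{2,d}\}$ by monotonicity of the kissing number in dimension; your accounting of $2k+3\le 2n+1$ queries in at most $n+1$ rounds, plus $2n\kappa_{2,k}$ queries in at most $2n$ rounds, spells out details the paper's one-line justification leaves implicit.
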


Notice that this result shows a remarkable separation between the $\ell_2$-sphere and the $\ell_2$-ball. When there is only a constant number of hidden points, \Cref{cor:sphere-alg} yields a constant $O(1)$-query complexity on the sphere. In contrast, \Cref{thm:lower-bound-nice-domains} proves an exponential $\exp(\Omega(d))$ query complexity lower bound for the ball even when there are only two hidden points.

\subsection{Open Questions}

This work builds substantially on the prior work \cite{PW24}. Nevertheless, to our knowledge, it is also the first to study, in a fairly general setting, the problem of learning a nearest-neighbor map from adaptive queries. This simple question arises in the intersection of query learning and computational geometry. For normed spaces, our minimax query complexity already finds new connections to a classic notion of kissing numbers. Given how unexplored this problem is, there are many natural directions for follow-up. We list a few here.

\begin{itemize}
    \item \emph{Generalizations.} We studied the problem in normed spaces, but it also makes sense in general metric spaces, or other canonical families of metric spaces like graphs. What happens there?
    \item \emph{Average-case hardness and other hardness parameters.} Are there natural distributions under which random instances are easier? Furthermore, can we explain more generally why the problem is easier on the Boolean hypercube, and low-rank hidden sets on the sphere?
    \item \emph{Noisy and inexact oracles.} What happens if the nearest-neighbor oracle is not always correct or only outputs an approximate nearest neighbor?
    \item \emph{Fine-grained lower bounds.} Can our lower bounds for the ball and cone in $\ell_2$ be extended to all compact domains with non-empty interior? Can they be generalized to $\ell_p$-norms?
    \item \emph{Computation.} In this paper, we are concerned only with query complexity. Are there settings that admit computationally-efficient algorithms?
\end{itemize}

\section{Characterizing Query Complexity in Normed Spaces}

In this section we prove \Cref{thm:characterization} which characterizes the query complexity for a general compact set $K \subset \R^d$ in a finite-dimension real normed space, $(\mathbb{R}^d, \|\cdot\|)$, in terms of the kissing number, $\kappa := \kappa_{\|\cdot\|}$, of the space (\Cref{def:kissing}). In \Cref{sec:kissing-UB}, we show that there is a deterministic $2n$-round algorithm that recovers any hidden set $H \subseteq K$ of size $n$ using at most $2n \kappa$ queries. In \Cref{sec:LB-generic} we show that there always exists a compact set $K$ where $\Omega(n\kappa)$ queries are needed. 


\subsection{Upper Bound} \label{sec:kissing-UB}

Our algorithm works by maintaining a set of discovered points $X \subseteq H$, the first of which can be obtained with a single arbitrary query anywhere in $K$. Given non-empty $X$, we consider the decomposition of $K$ into Voronoi cells induced by $X$. For each $x \in X$, define
\begin{align} \label{eq:Voronoi}
    V_X(x) = \left\{z \in K \colon \|z-x\| \leq \|z-x'\| \text{ for all } x' \in X  \right\} 
\end{align}
as the set of points in $K$ whose nearest-neighbor in $H$ is $x$. Note that these cells cover $K$. A key observation is the following.

\begin{observation} \label{obs:alg} Querying any $q \in V_X(x)$ must either (i) discover a new $h \in H \setminus X$, or (ii) return $x$ (or an equidistant point $y \in X$ in the case that $q$ lies on the boundary of multiple cells). If (ii) occurs, we learn that no hidden points exist in the open ball around $q$ of radius $\|q-x\|$, i.e. 
\[
H \cap \Bopen(q,\|q-x\|) = \emptyset .
\]
\end{observation}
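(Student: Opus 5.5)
This observation is essentially immediate from the definitions, so the plan is a short direct argument. Fix a query $q \in V_X(x)$ and let $h = \NN_H(q)$ be the oracle's answer. Since $X \subseteq H$, we split into two cases according to whether $h \in H \setminus X$ or $h \in X$.

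\emph{Case (i).} If $h \in H \setminus X$, then a new hidden point is discovered, and there is nothing further to prove.

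\emph{Case (ii).} Suppose instead that $h \in X$. We first show that $h$ is equidistant with $x$. By the definition \eqref{eq:Voronoi} of $V_X(x)$, we have $\|q-x\| \le \|q-x'\|$ for every $x' \in X$; in particular $\|q-x\| \le \|q-h\|$. On the other hand, $h$ minimizes distance to $q$ over all of $H \supseteq X \ni x$, so $\|q-h\| \le \|q-x\|$. Hence $\|q-h\| = \|q-x\|$. Either $h = x$, or $h$ is some $y \in X$ equidistant from $q$, which means $q \in V_X(x) \cap V_X(y)$ lies on a common boundary of cells. Next we derive the emptiness statement. Because $h$ is a nearest neighbor of $q$ in $H$, every $h' \in H$ satisfies $\|q-h'\| \ge \|q-h\| = \|q-x\|$. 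Therefore no point of $H$ lies in the open ball $\Bopen(q,\|q-x\|) = \{z : \|z-q\| < \|q-x\|\}$, which gives $H \cap \Bopen(q,\|q-x\|) = \emptyset$.

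There is no genuine obstacle here. The only point needing care is the tie case: we must use the radius $\|q-x\|$, which equals the returned distance $\|q-h\|$, and we must take the open ball, since other hidden points may lie exactly on its boundary sphere. The observation is meant to be used in the upper-bound analysis, where each case-(ii) query certifies an empty ball. That later step, not this one, is where kissing-number packing arguments will be needed.
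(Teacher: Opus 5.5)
Your argument is correct and is essentially the paper's reasoning: the paper gives no separate proof of this observation, relying on the same two facts you make explicit. By the Voronoi definition $\|q-x\|\le\|q-x'\|$ for all $x'\in X$, and the oracle's answer minimizes distance over all of $H\supseteq X$, so an answer lying in $X$ is at distance exactly $\|q-x\|$ and no hidden point lies in the open ball $\Bopen(q,\|q-x\|)$.
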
  

With this in mind, suppose we query a set of points $Q \subseteq V_X(x) \setminus \{x\}$ such that these open balls form a \emph{cover} of $V_X(x)$:
\[
V_X(x) \setminus \{x\} \subseteq \bigcup_{q \in Q} \Bopen(q,\|q-x\|) \text{.}
\]
Then, by \Cref{obs:alg} after querying every point in $Q$ we are guaranteed to either (a) discover a new hidden point, or (b) certify that no more hidden points exist in $V_X(x)$, that is $H \cap V_X(x) = \{x\}$. In particular, once (b) occurs, we do not need to perform this procedure for $x$ going forward. In conclusion, we can perform this Voronoi cell covering procedure at most twice for every point $x \in H$: once to discover $x$, and once to certify the complete discovery of $H \cap V_X(x)$. As a result, the query complexity of the entire algorithm is bounded by $2n t$ where $t$ is an upper bound on the covering size. We prove the following lemma (proof deferred to \Cref{sec:tangent-cover}) which shows we can always find a covering of size at most $\kappa$, the kissing number of the norm, $\|\cdot\|$.

\begin{lemma}[Tangent-ball covering] \label{lem:tangent-cover}
Let $V\subseteq\R^d$ be a non-empty compact set in a normed space $(\R^d,\|\cdot\|)$ with kissing number $\kappa$ (\Cref{def:kissing}), and let $x\in V$.  There exists a set of points $Q \subseteq V\setminus\{x\}$, of size $|Q|\le \kappa$, such that
\begin{align} \label{eq:tangent-cover}
    V\setminus\{x\}
    \subseteq
    \bigcup_{q\in Q}
    \Bopen\bigl(q,\|q-x\|\bigr).
\end{align}
\end{lemma}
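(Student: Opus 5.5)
Translate so that $x=0$ and write $V' = V\setminus\{0\}$. The plan is a greedy covering in which each new center is a point of $V'$ of smallest norm among those not yet covered; the kissing number then enters through a packing argument.

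\textbf{Construction.} Set $Q=\emptyset$. While $V'\setminus\bigcup_{q\in Q}\Bopen(q,\|q\|)$ is non-empty, pick from it a point $q$ of minimum norm and add it to $Q$.

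The minimum must be attained. The uncovered set $U = V\setminus\bigl(\{0\}\cup\bigcup_{q\in Q}\Bopen(q,\|q\|)\bigr)$ is a compact set minus the open neighborhood of $0$ given by the balls. However, $\{0\}$ itself is not open, so $U$ may fail to be closed near $0$. To handle this, I first note that no point near $0$ stays uncovered once $Q$ is non-empty. This is not quite true in general either, so instead I argue directly. Let $r = \inf\{\|z\| : z\in U\}$.
\begin{itemize}
\item If $r>0$, then $U = V\cap\{\|z\|\ge r'\}\setminus\bigcup_{q\in Q}\Bopen(q,\|q\|)$ for suitable $r'$, which is compact, so the minimizer exists.
\item If $r=0$, I select instead a point of nearly minimal norm, say within a factor $(1+\epsilon)$ of the infimum, and carry the slack through the packing estimate below.
\end{itemize}
The robust fallback is a compactness argument: prove that any finite family in $V'$ satisfying the separation condition has size at most $\kappa$, and realize the cover as a limit. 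I would handle this detail last.

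\textbf{Key packing claim.} For any two chosen points $q_i$ chosen before $q_j$, we have $\|q_i\|\le\|q_j\|$ and $\|q_j-q_i\|\ge\|q_i\|$, since $q_j$ was not covered by $\Bopen(q_i,\|q_i\|)$. Normalize: $u_i = q_i/\|q_i\|$. I want to show $\|u_i-u_j\|\ge 1$.

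This is a standard normed-space lemma. If $0<a=\|q_i\|\le b=\|q_j\|$ and $\|q_j-q_i\|\ge a$, then
\[
\Bigl\|\tfrac{q_j}{b}-\tfrac{q_i}{a}\Bigr\|\ge 1.
\]
The proof uses the triangle inequality:
\[
\|q_j-q_i\| \le \Bigl\|q_j - \tfrac{a}{b}q_j\Bigr\| + \Bigl\|\tfrac{a}{b}q_j - q_i\Bigr\| = (b-a) + a\,\|u_j-u_i\|.
\]
This only gives $a\le b-a+a\|u_j-u_i\|$, which is insufficient. So instead I compare with the point $q_i+\ldots$. The correct route is the monotonicity lemma: for fixed unit vectors $u,v$, the function $t\mapsto\|tv-u\|$ for $t\ge1$ satisfies $\|tv-u\|\le (t-1)+\|v-u\|$, which again gives the weak bound.

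The true argument is as follows. Suppose $\|u_j-u_i\|<1$. Consider $f(t)=\|t u_j - a u_i\| - t$. This function is non-increasing in $t$, since $\|(t+s)u_j-au_i\|\le\|tu_j-au_i\|+s$. At $t=a$ it equals $a\|u_j-u_i\|-a<0$. Hence $f(b)<0$, i.e.
\[
\|q_j-q_i\|<b.
\]
I need $<a$, so this still fails. Therefore I will switch the covering radius: strengthen the greedy rule by choosing $q$ of \emph{maximum} norm among uncovered points. Then $\|q_j\|\le\|q_i\|$ and the condition is $\|q_j-q_i\|\ge\|q_i\|\ge\|q_j\|$. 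Redo the computation with $g(t)=\|t u_i - b u_j\|-t$, which is non-increasing in $t$. If $\|u_i-u_j\|<1$, then $g(b)<0$ and so $g(a)<0$, i.e. $\|q_i-q_j\|<a$, a contradiction.

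So in the maximal-norm greedy, the normalized points are $1$-separated on the unit sphere, and $|Q|\le\kappa$ by \Cref{def:kissing}. Since the normalized points are distinct points on the unit sphere with pairwise distance at least $1$, the process must stop after at most $\kappa$ steps. When it stops, the cover \eqref{eq:tangent-cover} holds by construction.

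\textbf{Existence of maximizers.} This is now easy: the uncovered set is $V$ minus a union of open sets, which is compact, with $0$ removed. A maximum-norm point exists and is non-zero whenever some non-zero point is uncovered. The one wrinkle is that the uncovered set might contain only points accumulating at $0$; but the maximum of a compact set containing a non-zero point is non-zero. So the remaining obstacle, which was the one I expected to be the main difficulty, is just getting the direction of the greedy (maximum norm) right, together with the monotonicity inequality.
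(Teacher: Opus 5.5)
Your final argument is correct and is essentially the paper's proof: you greedily pick an uncovered point of \emph{maximum} norm, so that $\|q_j\|\le\|q_i\|$ and $\|q_j-q_i\|\ge\|q_i\|$ for $i<j$, and your monotonicity step $g(a)\le g(b)$ for $g(t)=\|tu_i-bu_j\|-t$ is exactly the paper's triangle inequality $r_i\le (r_i-r_j)+r_j\|u_i-u_j\|$. The existence-of-maximizer reasoning also matches (the uncovered set together with $x$ is compact, since $x$ is never covered). The abandoned minimum-norm attempt should simply be deleted, since that greedy genuinely fails: on $V=\{0,1,2,4,\ldots,2^m\}\subset\R$ it selects all $m+1$ nonzero points.
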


Given \Cref{lem:tangent-cover} and the above discussion, we state our algorithm formally in \Alg{Voronoi}. (See \Fig{Voronoi} for an illustration of an application of \Cref{lem:tangent-cover} during the algorithm's execution.)

\begin{algorithm}[ht]
\caption{Voronoi cell-covering algorithm\label{alg:Voronoi}} 
\textbf{Input:} A non-empty compact domain $K \subset \mathbb{R}^d$ in a normed space $(\R^d,\|\cdot\|)$ and nearest-neighbor query access to a hidden set $H \subseteq K$ of size $|H| = n$\; 
\textbf{Output:} The set $H$\;
\textbf{Initialization:} Let $x_0 \in H$ be the point returned by an arbitrarily chosen initial query in $K$. The set $X \gets \{x_0\}$ records the so-far discovered points and $C \gets \emptyset$ contains all $x \in X$ for which it has been certified that $H \cap V_X(x) \subseteq X$\;
\While{$\exists x \in X \setminus C$} { 
Let $V_X(x)$ denote the Voronoi cell of $x$ induced by $X$. (Recall \Cref{eq:Voronoi}.)\;
Apply \Cref{lem:tangent-cover} to construct a set $Q_x \subseteq V_X(x) \setminus \{x\}$ satisfying \Cref{eq:tangent-cover}\;
Query all points in $Q_x$. If some such query returns $h \in H \setminus X$, then set $X \gets X \cup \{h\}$. Otherwise, set $C \gets C \cup \{x\}$\;
}
\textbf{Return} $X$\;
\end{algorithm}

\begin{figure}
    \includegraphics[scale=0.4]{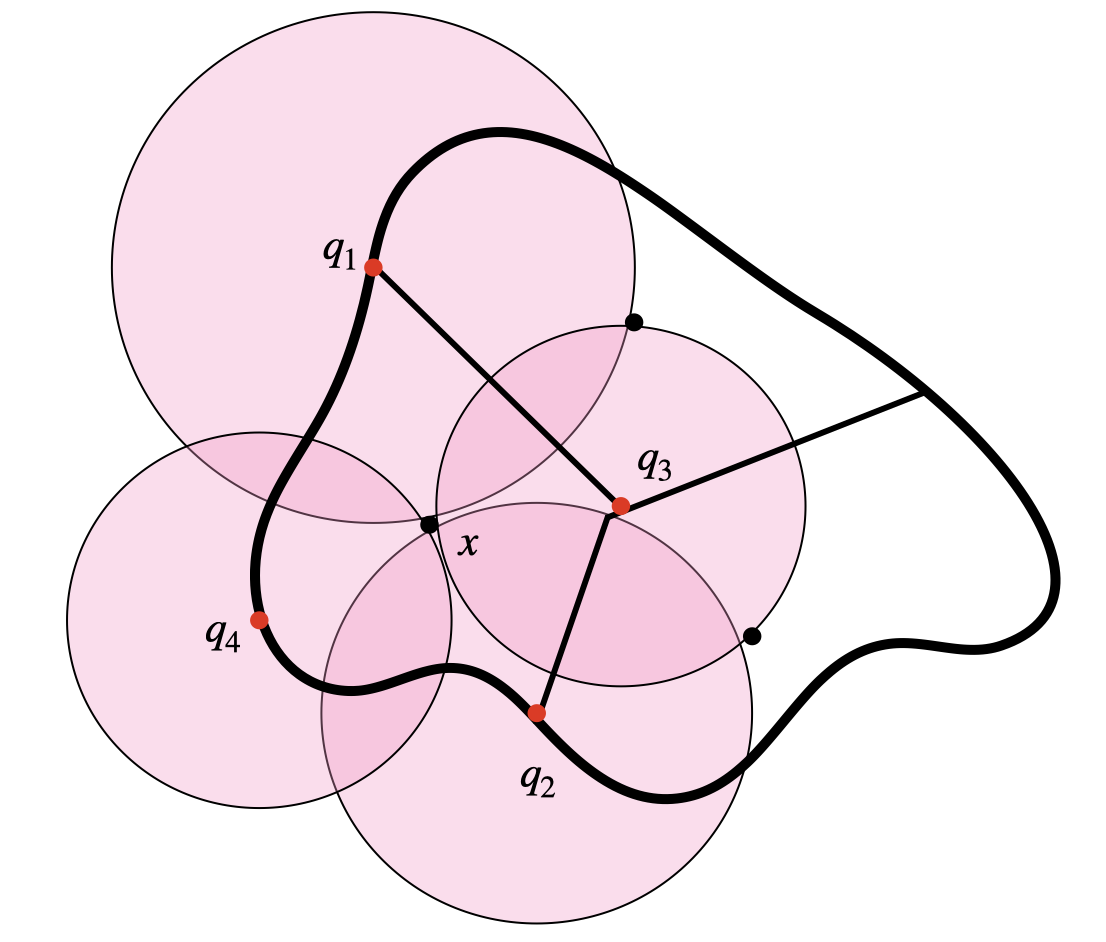}
    \caption{\small{An illustration showing an application of \Cref{lem:tangent-cover} in \Alg{Voronoi} under the $\ell_2$ norm. Black points denote the set $X$. Four queries $q_1,q_2,q_3,q_4 \in V_X(x)$ are selected such that the open balls $B(q_i,\|q_i-x\|)$ form a cover of the Voronoi cell, $V_X(x) \setminus \{x\}$.}}
    \label{fig:Voronoi}
\end{figure}

The correctness of \Alg{Voronoi} follows by the above discussion. That is, at termination the algorithm always returns the set $H$. To bound the query complexity, \Cref{lem:tangent-cover} shows that every round of queries made during an iteration of the while-loop either increments $|X|$ or increments $|C|$, using at most $\kappa$ queries. Each of these can be incremented at most $n$ times and so the total number of queries is at most $2n \kappa$. This completes the proof of the upper bound stated in \Cref{thm:characterization}.

\subsubsection{Proof of \Cref{lem:tangent-cover}} \label{sec:tangent-cover}

We construct $Q$ greedily while keeping track of the region of $V$ that is left uncovered. Initialize $R_0 = V$. We proceed in stages, selecting one point at a time, until $V$ has been covered. For $j \geq 1$, we let $q_j \in V$ denote the point chosen in stage $j$ and let
\[
    R_{j} = V \setminus \bigcup_{i \leq j}
    \Bopen\bigl(q_i,\|q_i-x\|\bigr) \text{.}
\]
Because the balls are open, $R_{j}$ is a closed subset of the compact set $V$, and is therefore compact. If $R_{j} = \{x\}$, we stop, since this means $V \setminus \{x\}$ has been covered. Otherwise, we select $q_{j+1} \in R_j$ to be a point in $R_j$ with maximum distance $\|q_{j+1}-x\|$ from $x$. Such a maximizer exists by compactness of $R_j$, and its distance from $x$ is non-zero since $R_j \neq \{x\}$. For every selected point $q_j$, write
\begin{align} \label{eq:radii}
    r_j=\|q_j-x\| >0 ~\text{ and }~
    u_j=\frac{q_j-x}{r_j}\in \mathbb{S}^{d-1}.
\end{align}
Since $R_0 \supset R_1 \supset \cdots \supset R_j$ are nested and $q_j$ maximizes the distance from $x$ in $R_{j}$, we have $r_1 \ge r_2 \ge \cdots \geq r_j$. Now, fix $i < j$.  At the time $q_j$ is selected, it has not been covered by the ball centered at $q_i$ and so $\|q_j-q_i\| \ge r_i$. Writing $q_i=x+r_i u_i$, $q_j=x+r_j u_j$, applying the triangle inequality and homogeneity, we obtain
\begin{align} \label{eq:code-bound1}
    r_i &\le \|q_i-q_j\| \nonumber \\
    &= \|r_i u_i-r_j u_j\| \nonumber \\
    &= \|(r_i - r_j)u_i + r_j(u_i-u_j)\| \nonumber \\ 
    &\le (r_i-r_j)\|u_i\| + r_j\|u_i - u_j\| = r_i - r_j +  r_j\|u_i - u_j\| \text{.}
\end{align}
Rearranging terms yields $\|u_i - u_j\| \geq 1$. Thus, the selected directions $U = \{u_i \colon i \geq 1\}$ satisfy $\|u_i - u_j\| \geq 1$ for all $i \neq j$. Hence, we have $|U| \leq \kappa$ by \Cref{def:kissing}. \\

\subsection{Lower Bound} \label{sec:LB-generic}

We now prove the following lower bound, showing that our algorithm in the previous section is optimal for an arbitrary compact $K$ in any finite-dimensional normed space.  

\begin{theorem}[Worst-case lower bound] \label{thm:worst-case-lower-bound}
    Suppose that a finite-dimensional normed space $(\R^d,\|\cdot\|)$ has kissing number $\kappa \geq 2$. There exists a compact set $K \subset \R^d$ such that any randomized adaptive algorithm that exactly recovers every $n$-point hidden set $H \subset K$ with probability $\Omega(1)$ must make at least $\Omega(n \kappa)$ queries.
\end{theorem}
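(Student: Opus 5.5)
We will build $K$ from $m=\lfloor n/2\rfloor$ well-separated gadgets, each forcing about $\kappa-1$ queries. First we build a single gadget. Fix a maximal kissing configuration $U=\{u_1,\dots,u_\kappa\}$ with $\|u_i\|=1$ and $\|u_i-u_j\|\ge 1$ for $i\neq j$. Let the gadget centered at $c$ be
\[
G(c)=\{c\}\cup\{c+u_i : i\le\kappa\}.
\]
In this gadget the hidden set contains the center $c$ and possibly one petal $c+u_{i^*}$. The difficulty is that a query $q$ equidistant from $c$ and a petal could be answered ambiguously. We therefore shrink the petals slightly, using $c+(1-\epsilon)u_i$ or a small perturbation. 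The goal is that any query $q\in G(c)$ whose answer reveals whether $c+u_{i^*}$ is hidden must essentially be $q=c+u_{i^*}$ itself.

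Concretely, we plan to check the following claim. If $H\cap G(c)=\{c\}$ versus $\{c,c+u_{i^*}\}$, then querying the petal $c+u_j$ with $j\ne i^*$ returns $c$ in both cases. The reason is that $\|u_j-u_{i^*}\|\ge 1=\|u_j\|$, and a tie-break or perturbation makes this strict. Querying $c$ returns $c$ in both cases. So only the query at $c+u_{i^*}$ distinguishes the two instances. To handle ties robustly we use the scaled petals $c+(1-\epsilon)u_i$. For $j\neq i^*$ the inequality $\|(1-\epsilon)(u_j-u_{i^*})\|\ge 1-\epsilon=\|(1-\epsilon)u_j\|$ still holds with equality possible. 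We therefore instead let the hidden petal be at radius $1-\epsilon$ and the queries at petal points of radius $1$, or we simply fix a worst-case tie-breaking oracle that answers the center. An adversarial oracle in the lower bound is legitimate, since the algorithm must work against any nearest-neighbor map.

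Next we place gadgets $G(c_1),\dots,G(c_m)$ with $\|c_a-c_b\|\ge 10$, and take $K$ to be their union, which is finite and hence compact. Every hidden set $H$ contains all $m$ centers, plus for each gadget either nothing or one petal. To keep $|H|=n$, each gadget contains exactly one of its $\kappa$ petals, indexed by $i_a$. If $n$ is odd, we add one extra far-away fixed point. Then $|H|=2m(+1)=n$. Since the gadgets are far apart, a query in $G(c_a)$ always returns a point of $G(c_a)\cap H$. By the claim, the only informative query in gadget $a$ is a query at the hidden petal $c_a+u_{i_a}$; every other query returns $c_a$ regardless of $i_a$.

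Finally we apply Yao's principle. We take $i_1,\dots,i_m$ independent and uniform on $[\kappa]$, and consider deterministic algorithms. Within a single gadget the interaction reduces to searching for a uniformly hidden item among $\kappa$ items. Each query tests one item, and an unfound index can only be guessed. To succeed with probability $p$ on gadget $a$, the algorithm needs expected roughly $p\kappa-1$ queries there. By independence across gadgets and a Markov or averaging argument, succeeding on all gadgets with probability $\Omega(1)$ forces $\Omega(m\kappa)=\Omega(n\kappa)$ queries in total. The main obstacle is making this final step rigorous. Success on all gadgets simultaneously with constant probability must force most gadgets to be searched nearly fully. We will handle this by noting that with $t_a$ queries in gadget $a$ (adaptively chosen), success there has probability at most $(t_a+1)/\kappa$ conditionally. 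We then bound the product $\prod_a \min(1,(t_a+1)/\kappa)$ subject to $\sum_a t_a=T$ using AM–GM, which gives $T\ge \Omega(m\kappa)$ whenever the success probability is a constant. This needs $\kappa\ge 2$, and when $\kappa$ is small the bound $\Omega(n)$ follows trivially.
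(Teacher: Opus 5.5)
Your construction is the paper's: a finite $K$ built from well-separated translates of $U\cup\{0\}$, one uniformly random petal per copy, and a worst-case oracle that breaks the tie $\|q-c\|=1\le\|q-(c+u_{i^*})\|$ toward the center. Padding odd $n$ with a far-away point is a detail the paper skips. Drop the alternative of placing the hidden petal at radius $1-\epsilon$ and querying at radius $1$. In $\ell_2$, if $\|u_j-u_{i^*}\|_2=1$ then $\|u_j-(1-\epsilon)u_{i^*}\|_2^2=1-\epsilon(1-\epsilon)<1$, so querying $c+u_j$ returns the hidden petal and leaks information. Only the adversarial tie-breaking makes the gadget work.

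\textbf{Gap in the final step.} The step you single out as the main obstacle fails: the AM--GM product bound is false for adaptive algorithms.
\begin{itemize}
\item The conditional claim fails because the realized $t_a$ depends on $i_a$. Conditioned on $t_a=1$ because the first query in gadget $a$ hit, success there is certain, not $2/\kappa$.
\item The conclusion fails as well. Take $m=2$, $\kappa=3$, $T=3$: every fixed split gives $\prod_a\min(1,(t_a+1)/\kappa)\le 2/3$. Now query petal $1$ of gadget $1$; after a hit, spend both remaining queries on gadget $2$; after a miss, spend one more on gadget $1$ and one on gadget $2$. This succeeds with probability $1/3+(2/3)(2/3)=7/9$.
\item More generally, searching the gadgets one after another costs about $\kappa/2$ queries per gadget on average. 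So $m\kappa/2+O(\kappa\sqrt{m})$ queries already give constant success, whereas your product bound would demand roughly $m(\kappa-1)$.
\end{itemize}

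\textbf{Repair.} Use the linear averaging you mention only in passing.
\begin{itemize}
\item Fix $i_{-a}$. On the path where every query in gadget $a$ is answered by the center, the algorithm queries some $L$ distinct petals there. Hence it is correct on gadget $a$ for at most $L+1$ values of $i_a$.
\item If $i_a$ is the $j$-th distinct petal on that path, then $t_a\ge j$. This gives $\mathbb{E}[t_a\mid i_{-a}]\ge \frac{1}{\kappa}\sum_{j\le L}j+\frac{(\kappa-L)L}{\kappa}\ge L/2$.
\item Therefore $\Pr[S_a]\le(2\mathbb{E}[t_a]+1)/\kappa$, where $S_a$ is success on gadget $a$.
\item Joint success with probability $p$ forces $\Pr[S_a]\ge p$ for every $a$. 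Combined with $\sum_a\mathbb{E}[t_a]\le T$, this gives $T\ge m(p\kappa-1)/2$.
\item When $\kappa<2/p$, use counting instead. Each answer takes at most two values, so success is at most $2^T/\kappa^m$, forcing $T\ge m\log_2\kappa-\log_2(1/p)=\Omega(m)$.
\end{itemize}
This makes rigorous the step the paper itself only asserts as ``$k$ independent unstructured search problems.''
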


\begin{proof}
    We first give a construction for the case $n = 2$. Let $\overline{B} \subset \R^d$ be a closed unit ball centered at the origin and let $U \subset \partial B$ be a set of points forming a $1$-packing of the sphere $\partial B$ of maximal size. In particular, by \Cref{def:kissing}, the size of $U$ is 
    \[|U| = \kappa.\]
    Fix the finite (compact) domain $K = U \cup \{0\}$. We let $H \subset K$ be a randomized hidden set consisting of the origin and a point $x \in U$ chosen uniformly at random
    \[H = \{0,x\}.\]
    Since $\|q - 0\| \leq 1 \leq \|q-x\|$ for all $q \in K \setminus \{x\}$, 
    a worst-case nearest neighbor oracle is
    \[\NN_H(q) = \begin{cases}
        0 & q \ne x\\
        x & q = x.
    \end{cases}\]
    In particular, any query $q \in U \setminus \{x\}$ reveals no information about the hidden point $x$ and so recovering $x$ is equivalent to an unstructured search on $|U| = \kappa$ elements. Therefore, to successfully recover $H$ with $\Omega(1)$ success probability requires $\Omega(\kappa)$ queries for any randomized algorithm. Specifically, achieving success probability $1$ requires at least $\kappa-1$ queries.

    This construction easily generalizes to a set of $n$ points, where we generate $k = \lfloor n/2 \rfloor$ independent, translated copies of the 2-point construction whose centers are sufficiently far apart. Let $v_1,\ldots, v_k \in \R^d$ be any $k$ points such that $\|v_i - v_j\| \geq 5$ for all $i \neq j$. 
    Then, let $x_1,\ldots, x_k \in U$ be selected independently and uniformly at random. Define $K_n$ and $H_n$ to be the sets
    \[K_n = \bigcup_{i=1}^k (v_i + K) \qquad \textrm{and}\qquad H_n = \bigcup_{i=1}^k \{v_i, v_i + x_i\}.\]
    By construction, every query $q \in v_i + K$ will return a nearest neighbor in $v_i + K$, so no queries can be shared across copies of the problem. Therefore, recovering $x_1,\ldots,x_k$ requires solving $k$ independent unstructured search problems, each on $\kappa$ elements. Therefore, $\Omega(n \kappa)$ queries are required for any randomized algorithm to achieve $\Omega(1)$ success probability. In the case where the success probability is 1, $\lfloor n/2\rfloor \cdot (\kappa-1)$ queries are required.
\end{proof}

\section{Lower Bounds in Euclidean Convex Bodies}

Our \Cref{thm:worst-case-lower-bound} constructs a worst-case domain on which the kissing number characterizes the query complexity. It is then natural to ask whether the problem is not as hard within a less pathological space. In this section, we partially address this question by proving lower bounds for natural Euclidean convex bodies, namely the ball and the cone.

\subsection{Additive Lower Bound in the Ball} \label{sec:LB-ball}


Our first result gives a hard construction when the domain is the closed unit ball in $\mathbb{R}^d$ under the $\ell_2$-distance. It shows that $\Omega(2^{0.2075d})$ nearest neighbor queries are required even in the case of a two-point hidden set---one point is the origin and the other point is chosen uniformly at random from the boundary of the ball. The additive lower bound $\Omega(n + 2^{0.2075d})$ stated in \Cref{thm:lower-bound-nice-domains} follows by combining this result with the trivial $\Omega(n)$ information-theoretic lower bound.

\begin{theorem}[Lower bound in the ball] \label{thm:lower-bound-ball}
    Let $K \subset \mathbb{R}^d$ be the closed unit $\ell_2$-ball. Every randomized adaptive algorithm that recovers every $2$-point hidden set with probability at least $\Omega(1)$ makes at least 
    \[\Omega\left((4/3)^{d/2}\right)\ \text{queries}.\]
\end{theorem}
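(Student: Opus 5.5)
Following the description just given, I would take the hidden set to be $H=\{0,x\}$, where $x$ is uniform on the unit sphere $\Sph^{d-1}=\partial K$. By Yao's principle it suffices to show that any deterministic adaptive algorithm making $T$ queries succeeds with probability $O(T\cdot(3/4)^{d/2})$. The first step is to find out exactly which queries are informative. A query $q\in K$ returns $x$ (or a tie) only if $\|q-x\|\le\|q\|$. Expanding, this is $1-2\langle q,x\rangle\le 0$, that is, $\langle q,x\rangle\ge 1/2$. Otherwise the oracle returns $0$, and we may fix the adversarial tie-breaking so that $0$ is returned on ties. Every query is then either a "hit", after which $x$ is determined up to further work, or the uninformative answer $0$. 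Until the first hit, the deterministic algorithm's query sequence $q_1,q_2,\dots$ is therefore fixed in advance, independent of $x$.

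The second step bounds the probability that a single fixed query hits. Since $\|q\|\le 1$, we have $\langle q,x\rangle\ge 1/2$ only if $\langle q/\|q\|,x\rangle\ge 1/2$. So the hit region is contained in a spherical cap $\{x\in\Sph^{d-1}:\langle u,x\rangle\ge 1/2\}$, whose angular radius is $60^\circ$. I would then invoke the standard cap-measure estimate: the normalized measure of the cap $\{\langle u,x\rangle\ge t\}$ is at most $(1-t^2)^{(d-1)/2}$, up to polynomial factors. One way to prove it is to compare the cap with a ball of radius $\sqrt{1-t^2}$; another is to integrate the density $(1-s^2)^{(d-3)/2}$ directly. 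With $t=1/2$ this gives $(3/4)^{(d-1)/2}\cdot\mathrm{poly}$, and with a careful bound one can get $O((3/4)^{d/2})$ or something close to it.

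The third step is a union bound. With no hit among the $T$ queries, the algorithm's view is identical for all $x$, so it must output a fixed guess $\hat x$, which is correct with probability $0$ because $x$ has a continuous distribution. Hence the success probability is at most $\Pr[\text{some } q_i \text{ hits}]\le T\cdot\sigma(\text{cap}_{1/2})$, and requiring this to be $\Omega(1)$ forces $T=\Omega\bigl((4/3)^{d/2}\bigr)$.

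I expect the main obstacle to be the precise cap bound without polynomial loss. The clean route is the known inequality $\sigma(\{\langle u,x\rangle\ge t\})\le (1-t^2)^{d/2}$, valid for $t\ge \sqrt{2/d}$ or so; the small-$d$ cases are absorbed into constants. If only the bound with a $\sqrt d$ factor or exponent $(d-1)/2$ is available, I would state the result as $(4/3)^{d/2}$ up to such lower-order factors, which the $\Omega$ can tolerate only after the sharper inequality is used. So I would prove that inequality directly by a Gaussian or volume-comparison argument: the cone over the cap sits inside a ball of radius $\sqrt{1-t^2}$ centered at $tu$.
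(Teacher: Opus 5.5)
Your argument is the paper's own. It uses the same instance $H=\{0,x\}$ with $x$ uniform on $\Sph^{d-1}$. It uses the same observation that a query returns $x$ only if $\langle q,x\rangle\ge 1/2$, so $x$ must lie in the $1/2$-cap around $q/\|q\|$. And it uses the same union bound over the query sequence generated along the all-$0$ transcript; the paper conditions on the algorithm's coins instead of invoking Yao, which is equivalent.

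The only divergence is the cap estimate.
\begin{itemize}
\item \textbf{Paper's route.} It integrates in polyspherical coordinates and uses log-convexity of $\Gamma$ to get $c_\tau d^{-1/2}(1-\tau^2)^{d/2}$.
\item \textbf{Your route.} Volume comparison gives $(1-t^2)^{d/2}$ with constant $1$ and no special functions. It works exactly when the ball $B(tu,\sqrt{1-t^2})$ contains the origin, i.e.\ $t\le 1/\sqrt2$. On the cap, $\|y-tu\|^2=1-2t\langle u,y\rangle+t^2\le 1-t^2$, and convexity of the ball then absorbs the cone over the cap. Since $t=1/2$ satisfies this for every $d$, you need neither the condition $t\ge\sqrt{2/d}$ you mention nor absorbing small $d$ into constants.
\end{itemize}

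Your worry about the exponent $(d-1)/2$ is misplaced, because $(3/4)^{(d-1)/2}=\sqrt{4/3}\,(3/4)^{d/2}$ differs only by a constant. Only a \emph{growing} polynomial prefactor would cost anything. The direct integration in fact yields a decaying $d^{-1/2}$ factor, which is what the paper's version buys (a marginally stronger $\Omega(\sqrt d\,(4/3)^{d/2})$ if one wanted it). Your version buys a more elementary proof.
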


\begin{proof}
    The following forms a hard, randomized 2-point hidden set, $H$. 
    Let $H \subset K$ be a hidden set consisting of the origin and a uniformly random point $x \in \mathbb{S}^{d-1}$ on the boundary of the unit ball,
    \[H = \{0, x\}.\]
    If a query $\NN_H(q)$ returns $x$, then $x$ must be contained in the closed ball $\overline{B}(q, \|q\|)$. Otherwise, the distance from $q$ to $x$ would be greater than the distance $\|q\|$ from $q$ to the origin. In particular,
    \[x \in \mathbb{S}^{d-1} \cap B(q, \|q\|).\]
    This set $\mathbb{S}^{d-1} \cap B(q, \|q\|)$ is contained in the spherical cap $A_q$ where
    \[ A_q := \left\{y \in \mathbb{S}^{d-1} : \frac{q^\top y}{\|q\|} \geq 1/2\right\}.\]

    The surface area of this spherical cap is a standard computation (e.g. Section 1.2 of \cite{jenssen2018kissing}). We provide a version here with explicit constants for which we give a proof in \Cref{sec:spherical-cap}.
    
    \begin{restatable}[Surface area of a spherical cap]{lemma}{sphericalcap}
    \label{lem:spherical-cap}
    Fix $0 < \tau \leq 1$. For every integer $d \geq 2$, let
    $\mathbb{S}^{d-1} \subset \mathbb{R}^d$ be the unit sphere and let
    \[
        A = \{x \in \mathbb{S}^{d-1} : x_1 \geq \tau\}.
    \]
    Then there exists a constant $c_\tau > 0$, depending only on $\tau$,
    such that
    \[
        \frac{\vol_{d-1}(A)}
             {\vol_{d-1}(\mathbb{S}^{d-1})}
        \leq
        c_\tau \sqrt{\frac{1}{d}}
        \left(1-\tau^2\right)^{d/2}.
    \]
    \end{restatable}

    Let $q_1,\ldots, q_M$ be the (possibly random) sequence of queries that the algorithm would make if the nearest-neighbor oracle returns the origin for each query. This occurs in the event that
    \[E = \left\{x \notin \bigcup_{m=1}^M A_{q_m}\right\}.\]
    Combining the fact that $x$ is chosen uniformly from the sphere with \Cref{lem:spherical-cap}, we obtain that the probability that this event occurs is lower bounded by
    \[\Pr\left(E\right) \geq 1 - \sum_{m=1}^M \Pr\big(x \in A_{q_m}\big)  \geq 1 -  M c \left(\frac{3}{4}\right)^{d/2},\]
    where $c > 0$ is a universal constant. Notice that if the right-hand side is strictly positive, the union $\bigcup A_{q_m}$ leaves a non-negligible part of $\mathbb{S}^{d-1}$ uncovered. Thus, on the event $E$, there is almost surely no way for the learner to guess $x$ correctly. In particular, we have
    \[\Pr\big(\textrm{learner recovers $x$}\big) \leq M c \left(\frac{3}{4}\right)^{d/2}.\]
    To recover $x$ with success probability $\Omega(1)$ requires a query complexity $M = \Omega((4/3)^{d/2})$.
\end{proof}

\subsection{Multiplicative Lower Bound in the Cone} \label{sec:LB-cone}

In the case that there are $n$ hidden points, \Cref{thm:lower-bound-ball} implies an overall additive lower bound of $\Omega(n + 2^{0.2075d})$. The next lower bound gives a construction where a multiplicative $n \cdot \exp(\Omega(d))$ queries are required, showing that there are domains that admit a `direct-sum' structure, where we can embed $\Theta(n)$ independent copies of a hard construction into the space. We give a construction on the high-dimensional cone; the core source of hardness is essentially the same as for the ball.

\begin{remark} Perhaps a more immediate approach to proving a multiplicative $n \cdot \exp(\Omega(d))$ lower bound is to consider a high-dimensional cylinder in which $n$ copies of a $(d-1)$-dimensional hard instance from \Cref{thm:lower-bound-ball} are embedded into cross-sections. In order to make these instances completely independent requires sufficiently large distance between each consecutive cross-section, requiring the domain itself to grow with $n$. Thus, such a construction shows that for every $n$, there exists a domain $K_n$ where $n \cdot \exp(\Omega(d))$ queries are needed. We believe that a much more interesting statement is to show that a single fixed domain achieves the lower bound for all $n$ simultaneously. The following \Cref{thm:lower-bound-cone} achieves this using the cone. \end{remark}


\begin{theorem}[Lower bound in the cone] \label{thm:lower-bound-cone}
    For every $d \geq 2$ and height $0 < h \leq \infty$, define the cone
    \[C_d^h := \big\{(s,v) \in [0,h] \times \mathbb{R}^{d-1} : \|v\|_2 \leq s\big\}.\]
    There are constants $c_1, c_2 > 0$ such that the following holds. Let $n \geq 4$. Every randomized adaptive algorithm that exactly recovers every $n$-point hidden set from a cone $C_d^h$ with probability at least $1/10$ must make at least 
    \[c_1 n \exp(c_2 d) \quad\textrm{queries}.\]
\end{theorem}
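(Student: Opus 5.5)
Following the remark preceding the statement, I would embed $k=\lfloor n/2\rfloor$ independent copies of the two-point ball instance of \Cref{thm:lower-bound-ball} into cross-sections of the cone at \emph{geometrically spaced} heights. The point is that $C_d^h$ is scale invariant, so one fixed cone serves every $n$.

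\textbf{Construction.} Fix a large constant $\lambda$ and set $s_i = h\lambda^{-i}$ (or $s_i=\lambda^i$ if $h=\infty$) for $i=1,\dots,k$. Let $c_i=(s_i,0)$ be the axis point at height $s_i$. Let $x_i=(s_i,s_iu_i)$ be a point on the boundary of the cross-section, with $u_1,\dots,u_k$ independent and uniform on $\Sph^{d-2}$. The hidden set is $H=\{c_i,x_i\}_{i\le k}$, plus one extra axis point if $n$ is odd. The oracle breaks ties in favour of axis points. Since the $u_i$ are continuous, an $x_i$ that has never been returned by the oracle can be guessed only with probability $0$. So success requires that every $x_i$ be returned by some query.

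\textbf{Key geometric step.} Write a query as $q=(t,w)$ with $\|w\|\le t$. Suppose $x_i$ is returned. Then $\|q-x_i\|^2\le\|q-c_j\|^2$ for every $j$. Expanding gives
\[
2s_i\, w^\top u_i \;\ge\; (t-s_i)^2+s_i^2-(t-s_j)^2 .
\]
I would take $j$ to be the level with $s_j$ closest to $t$, or $j=i$, and check the following.
\begin{enumerate}
\item If $s_i\ge \lambda' t$ for a suitable constant $\lambda'$, the right-hand side is of order $s_i^2$. The left-hand side is at most $2s_it$, so the inequality fails.
\item If $s_i\le t/\lambda'$, then comparing with $c_j$, where $s_j$ is within a factor $\sqrt\lambda$ of $t$, makes the right-hand side of order $t^2\gg 2s_it$. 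Again the inequality fails.
\item Hence only $O(1)$ levels $i$ are feasible for a given query. For each of them, taking $j=i$ gives $w^\top u_i/\|w\|\ge s_i/(2t)\ge\tau_0$ for a constant $\tau_0>0$ that depends on $\lambda$.
\end{enumerate}
By \Cref{lem:spherical-cap} applied in $\Sph^{d-2}$, a fixed query therefore returns some $x_i$ only if $u_i$ lies in one of $O(1)$ caps, each of normalized measure at most $c\,(1-\tau_0^2)^{(d-1)/2}=e^{-c_2' d}$. The small constant $0.0156$ in \Cref{thm:lower-bound-nice-domains} is consistent with a lossy choice of $\lambda,\tau_0$ at this point.

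\textbf{Counting.} Let $M$ be the number of queries. I bound the expected number of queries that return a new point $x_i$ via a martingale argument. Conditioned on the history, each unrevealed $u_i$ is uniform on $\Sph^{d-2}$ minus the caps already excluded by earlier default answers. As long as $M e^{-c_2'd}\le 1/2$, these excluded sets have measure at most $1/2$. The conditional probability that a given query reveals any $x_i$ is therefore at most $2\cdot O(1)\cdot e^{-c_2'd}$. If instead $M e^{-c_2'd}> 1/2$, the bound is immediate once $n$ is at most exponential in $d$; otherwise I would cap the per-level cap mass using the separate-levels structure.

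Summing, the expected number of revealed $x_i$ is at most $C M e^{-c_2' d}$. Success with probability $1/10$ requires all $k\ge n/4$ of them to be revealed. Markov's inequality then gives $M\ge c_1 n e^{c_2 d}$.

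\textbf{Main obstacle.} I expect the hardest part to be the geometric case analysis in the key step: showing uniformly over all $(t,w)$ that only $O(1)$ levels are feasible and that each feasible one forces a cap of constant angular radius. The choice of $j$ when $t$ falls between two levels is delicate, and so are the boundary cases $\|w\|\approx t$. I would first work with $\lambda=16$ and explicit inequalities, and enlarge $\lambda$ if needed.
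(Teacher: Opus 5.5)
Your plan is sound in outline, but two steps fail as written.

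\textbf{The top level leaks when $h=\infty$.} In your inequality, a query $(t,w)$ rules out level $i$ whenever some level $j$ satisfies $4s_i<s_j\le t$. Indeed, then $s_j(2t-s_j)\ge s_jt>4s_it\ge 2s_i(2t-s_i)$, which says exactly that the right-hand side exceeds $2s_it\ge 2s_iw^\top u_i$. The surviving threshold $s_i/(2t)$ is a constant only if $t<\lambda s_i$ for the largest level $s_i\le t$, and that needs either a level above $t$ or a ceiling. For $h<\infty$ the ceiling $t\le h=\lambda s_1$ supplies this, and your step 3 holds: for $\lambda>4$, only the largest level $\le t$ and the next one up can be feasible.

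For $h=\infty$ with $s_i=\lambda^i$, nothing sits above $s_k$. For $t>2s_k$, the query $(t,tv)$ returns $x_k$ exactly when $v^\top u_k>s_k/(2t)$ (ties aside). So the pair $(t,\pm tv)$ reveals $x_k$ with probability tending to $1$ as $t\to\infty$, and your claim that every feasible level forces a cap of constant angular radius is false there. Fix it by spending one hidden point on a roof, an axis point at height $\lambda^{k+1}$. This is precisely the role of $x_R$ in the paper, with $x_L$ as the matching floor. Discarding the top level also works, but then you must average only over the protected levels. A leaky level contributes up to $1$ to the expected number $R$ of revealed levels, which swamps $k/10$ when $k\le 10$.

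\textbf{The regime $Me^{-c_2'd}>1/2$ is left open}, yet the theorem covers all $n\ge4$, not only $n\le e^{cd}$. The per-level argument you allude to works with a stopping rule:
\begin{itemize}
\item Let $N_i$ count the queries so far for which level $i$ is feasible.
\item While $N_i\le e^{c_2'd}/2$, the posterior of an unrevealed $u_i$ keeps mass at least $1/2$. So each such query reveals $x_i$ with conditional probability at most $2e^{-c_2'd}$.
\item Each query is feasible for at most two levels, so $\sum_iN_i\le 2M$. Hence at most $4Me^{-c_2'd}$ levels ever cross the threshold.
\item Together these give $\mathbb{E}[R]\le 8Me^{-c_2'd}$ with no assumption on $M$.
\end{itemize}

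With both repairs your route is correct, and it differs from the paper's. The paper spends four points per copy and tunes $\gamma=1+\sqrt2$ so that no query outside $\alpha_iT_d$ can ever return $\alpha_ix_U^i$. The copies are then fully decoupled, and each is verbatim the ball argument with $\tau=1/(2(1+\gamma))$. You instead let neighbouring centres act as walls, which gives $\lfloor n/2\rfloor$ rather than $\lfloor n/4\rfloor$ copies. The cost is the feasibility case analysis, a smaller exponent for $\lambda=16$, and the adaptive counting. That counting is, however, more careful than the paper's informal `queries cannot be shared' direct-sum step.
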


\begin{proof} We first modify the construction in \Cref{thm:lower-bound-ball}. Let $\gamma := 1 + \sqrt{2}$. Define the truncated cone
\[T_d := \left\{(s,v) \in \left[1 - \frac{1}{\gamma}, 1 + \gamma \right] \times \mathbb{R}^{d-1} : \|v\|_2 \leq s\right\}.\]

    \noindent\begin{minipage}[t]{0.64\textwidth}
    The figure to the right visualizes $T_2 \subset \mathbb{R}^2$, which has been formed by cutting off the base and tip of a larger cone, shaded in gray. We also define four points. Let $x_L$ and $x_R$ be
    \[x_L = (1 - \gamma^{-1}, 0,\ldots, 0) \qquad\textrm{and}\qquad x_R = (1 + \gamma, 0,\ldots, 0).\]
    These are visualized by the left- and rightmost points in the figure. We also let $x_O$ and $x_U$ be points of the form
    \[x_O = (1,0,\ldots, 0)\qquad \textrm{and}\qquad x_U \in \{1\} \times \mathbb{S}^{d-2}.\]
    Choose $x_U$ uniformly at random, and let the hidden set be
    \[H_{T_d} = \{x_L, x_R, x_O, x_U\}.\]
    Notice that by our choice of $\gamma$, if a query $q \in C_d^\infty \setminus T_d$ comes from the gray region, then $x_U$ is never a nearest neighbor to $q$,
    \[x_U \ne \NN_{H_{T_d}}(q).\]
    This is because one of $x_L$ or $x_R$ is always nearer.
    \end{minipage}\hfill\begin{minipage}[t]{0.35\textwidth}
        \vspace{-15pt}
        \centering
            \begin{tikzpicture}[
    scale=1,
    line cap=round,
    line join=round,
    point/.style={
        circle,
        fill=black,
        inner sep=1.6pt
    },
    single tick/.style={
        postaction={decorate},
        decoration={
            markings,
            mark=at position 0.5 with {
                \draw[line width=0.8pt]
                    (0,-3pt) -- (0,3pt);
            }
        }
    },
    double tick/.style={
        postaction={decorate},
        decoration={
            markings,
            mark=at position 0.5 with {
                \draw[line width=0.8pt]
                    (-2pt,-3pt) -- (-2pt,3pt);
                \draw[line width=0.8pt]
                    ( 2pt,-3pt) -- ( 2pt,3pt);
            }
        }
    }
]

\pgfmathsetmacro{\a}{1 - 1/(1 + sqrt(2)) + 0.1}
\pgfmathsetmacro{\b}{2 + sqrt(2)}
\pgfmathsetmacro{\c}{0.7}

\coordinate (TL) at (\a, \a);
\coordinate (P)  at (1.2,1.2);
\coordinate (TR) at (\b, \b);
\coordinate (BR) at (\b,-\b);
\coordinate (BL) at (\a,-\a);

\coordinate (LM) at (\a,0);
\coordinate (C)  at (1.2,0);
\coordinate (RM) at (\b,0);

\coordinate (O) at (0,0);
\coordinate (TO) at (\b+\c, \b+\c);
\coordinate (BO) at (\b+\c, -\b-\c);


\fill[gray!20] (O) -- (TL) -- (BL);
\draw[thick,black!20] (TL) -- (O) -- (BL) ;

\fill[gray!20] (TO) -- (BO) -- (BR) -- (TR);
\draw[thick,black!20] (TR) -- (TO) -- (BO) -- (BR);

\draw[densely dotted] (LM) -- (RM);
\draw[densely dotted] (1.2,-1.2) -- (P);

\draw[thick] (BL) -- (BR);

\draw[thick]             (BL) -- (LM);
\draw[thick,single tick] (LM) -- (TL);

\draw[thick,single tick] (TL) -- (P);
\draw[thick,double tick] (P)  -- (TR);

\draw[thick,double tick] (RM) -- (TR);
\draw[thick]             (BR) -- (RM);

\node[point] at (LM) {};
\node[point] at (C)  {};
\node[point] at (RM) {};
\node[point] at (P)  {};

\node[below right] at (C) {$x_O$};
\node[below right] at (P) {$x_U$};
\node[left] at (LM) {$x_L$};
\node[right] at (RM) {$x_R$};
\end{tikzpicture}
    \end{minipage}
    
    \begin{claim} \label{clm:coneLBhelper}There are constants $c_1, c_2 > 0$ such that any randomized adaptive algorithm that finds $x_U$ using nearest neighbor queries in $T_d$ with probability at least $1/10$ needs $c_1 \exp (c_2 d)$ queries. \end{claim}
    
    \par\medskip Assume \Cref{clm:coneLBhelper} for now. We can deduce the lower bound for the cone $C_d^h$. In particular, we can construct $k = \lfloor n /4 \rfloor$ disjoint and independent (scaled) copies of $T_d$. Formally, let $x_U^i$ be independent draws from $\{1\} \times \mathbb{S}^{d-2}$ for $i \in [k]$. Let $\alpha_1,\ldots, \alpha_K \in (0,1)$ be scale parameters satisfying for all $i =1,\ldots, k-1$,
    \[0 < \alpha_i\left(1 - \frac{1}{\gamma}\right) < \alpha_i \left(1 + \gamma\right) < \alpha_{i+1} \left( 1 - \frac{1}{\gamma}\right) < \alpha_{i+1} ( 1 + \gamma) < h.\]
    Let the hidden set be constructed as follows:
    \[H = \bigcup_{i=1}^k \big\{\alpha_i x_L, \alpha_i x_R, \alpha_i x_O, \alpha_i x_U^i\big\}.\]
    By \Cref{clm:coneLBhelper}, finding $\alpha_i x_U^i$ requires $c_1 \exp (c_2 d)$ nearest neighbor queries from within the scaled truncated cone $\alpha_i T_d$. Since each of these truncated cones is disjoint, queries cannot be shared, and recovering every $\alpha_ix_U^i$ requires $k c_1 \exp (c_2 d)$ queries. Letting $c_1 \leftarrow c_1/4$ shows the lower bound. This completes the proof of \Cref{thm:lower-bound-cone}. \end{proof}

    \begin{proofof}{\Cref{clm:coneLBhelper}}
    We follow the same technique used in \Cref{thm:lower-bound-ball}. Let $x_U = 1 \oplus u$, where $u \in \mathbb{S}^{d-2}$. The first coordinates of $x_U$ and $x_O$ coincide, so whether a point is closer to one or the other only depends on the last $d-1$ coordinates. They are the points in the set
    \[\left\{(s,v) \in T_d : v^\top u \geq \frac{1}{2}\right\}.\]
    Since $\|v\|_2 \leq s \leq 1 + \gamma$, any point $(s,v)$ in this set satisfies
    \[\frac{v^\top u}{\|v\|} \geq \frac{1}{2(1 + \gamma)}.\]
    Thus, the unit vector $v/\|v\|$ must be chosen from a spherical cap with threshold $\tau = \frac{1}{2(1 + \gamma)}$. 
    
    Because $u \in \mathbb{S}^{d-2}$ is chosen uniformly at random, \Cref{lem:spherical-cap} shows that this event has bounded probability. In particular, there is a dimension-independent constant $c_\tau > 0$ such that
    \[\Pr\big(x_U = \NN_H(q)\big) \leq c_\tau \left(1 - \frac{1}{4 (1 + \gamma)^2}\right)^{(d-1)/2}.\]
    As in the proof of \Cref{thm:lower-bound-ball}, an algorithm that finds $x_U$ with probability at least $1/10$ makes at least $M$ queries where
    \[1 - M c_\tau\left(1 - \frac{1}{4 (1 + \gamma)^2}\right)^{(d-1)/2} < \frac{9}{10}.\]
    The claim follows by letting $c_0^{-1} = 1 - \frac{1}{4(1 + \gamma)^2}$, $c_1 = \frac{c_0 c_\tau}{10}$ and $c_2 = \frac{1}{2}\lg c_0 > 0.0156$.
    \end{proofof}

\section{Dimension Reduction in the Sphere} \label{sec:dim-reduction}

In this section we prove our dimension reduction \Cref{lem:dim-reduction} for the sphere under $\ell_2$-distance. The algorithm of \Cref{cor:sphere-alg} follows immediately by combining this lemma with \Alg{Voronoi}. We denote the unit sphere by $\Sph^{d-1}=\{x\in\R^d:\|x\|_2=1\}$. Observe that for $q,h\in\Sph^{d-1}$, we have
\[
    \|q-h\|_2^2=2-2\langle q,h\rangle.
\]
As a result, given query $q \in \Sph^{d-1}$, the $\ell_2$-nearest-neighbor oracle returns some $h \in H \subset \Sph^{d-1}$ with maximum inner product $\langle q,h\rangle$.

To prove \Cref{lem:dim-reduction}, we analyze the following iterative process. We maintain a set $X \subseteq H$ of points discovered so far and at each iteration make two queries on $\bm{w}$ and $-\bm{w}$ where $\bm{w}$ is drawn uniformly from the orthogonal subspace of the affine hull of $X$, denoted $\aff(X)$. We argue that as long as $H \not\subseteq \aff(X)$, then one of these queries will discover a new point with probability $1$.

Suppose that $X \subseteq H$ is a nonempty set of discovered points. Fixing an arbitrary $x_0 \in X$, define the linear subspace $L=\mathsf{span}(\{x-x_0:x\in X\})$ and the affine subspace $A=\aff(X)=x_0+L$. Note that $X \subset A$. Let $N=L^\perp$ denote the orthogonal subspace of $L$. In particular, note that the following holds:
\begin{align} \label{eq:x0}
    \forall w \in N, x \in A\colon ~ \langle w,x\rangle=\langle w,x_0\rangle \text{.}
\end{align}
The following lemma establishes the correctness of one iteration of the dimension reduction.

\begin{lemma} \label{lem:random-test}
Assume $N\ne\{0\}$ (equivalently, $\mathrm{dim}(L) < d$) and let $\bm{w}$  be drawn uniformly from $\Sph(N)=\{w\in N:\|w\|_2=1\}$. If $H\not\subseteq A$, then
\[
    \Pr_{\bm{w}}\left[
        \NN_H(\bm{w}) \not\in A
        \ \text{or}\
        \NN_H(-\bm{w})\not\in A
    \right]=1.
\]
In particular, querying both $\bm{w}$ and $-\bm{w}$ ensures that we learn some new point $z \in H \setminus X$ with probability $1$.
\end{lemma}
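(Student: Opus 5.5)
Set $c := \langle w, x_0\rangle$ for $w \in N$. By \Cref{eq:x0}, every point of $A$ has the same inner product with $w$; in particular every $x \in H \cap A$ satisfies $\langle w,x\rangle = c$. The plan is to show that, for almost every $w \in \Sph(N)$, some $h \in H\setminus A$ has $\langle w,h\rangle \neq c$. Then one of $\pm w$ strictly prefers a point outside $A$ over every point of $A$.

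\textbf{Step 1: reduction to a non-degeneracy condition.} Fix $h \in H \setminus A$ and write $h - x_0 = \ell + \nu$ with $\ell \in L$ and $\nu \in N$. Since $h \notin A = x_0 + L$, we have $\nu \neq 0$. For $w \in N$,
\[
\langle w,h\rangle - c = \langle w, h-x_0\rangle = \langle w,\nu\rangle.
\]
The set $\{w \in \Sph(N) : \langle w,\nu\rangle = 0\}$ is a great subsphere of $\Sph(N)$, so it has measure zero under the uniform distribution. Hence almost surely $\langle \bm w,\nu\rangle \ne 0$. When $\dim N = 1$, $\Sph(N)=\{\pm w_0\}$ and $\langle w_0,\nu\rangle\neq 0$ holds deterministically, since $\nu$ is a nonzero vector in the line $N$.

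\textbf{Step 2: the sign argument.} On the almost-sure event that $\langle \bm w,\nu\rangle \neq 0$, choose $\sigma \in\{\pm1\}$ with $\sigma\langle \bm w,\nu\rangle>0$. Then
\[
\langle \sigma \bm w, h\rangle > \langle \sigma \bm w, x\rangle \quad\text{for all } x \in H\cap A,
\]
since the right-hand side equals $\sigma c$ and the left-hand side equals $\sigma c + \sigma\langle\bm w,\nu\rangle$. Note that $\sigma\bm w \in \Sph^{d-1}$ is a legal query. As observed at the start of the section, on the sphere the $\ell_2$ nearest neighbor of a query is a point of $H$ maximizing the inner product with it. A maximizer cannot lie in $A$, because $h$ beats every point of $H\cap A$ strictly. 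So $\NN_H(\sigma\bm w)\notin A$, regardless of how the oracle breaks ties. Since $X\subset A$, the returned point is a new point of $H\setminus X$.

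\textbf{Main obstacle.} The only subtle point is the tie-breaking and measure-zero issue. The oracle may be adversarial among equidistant points, so we need a strict inequality against all of $H\cap A$, not merely some candidate outside $A$ with equal score. This is why we use only the single witness $h$ together with a strict inequality. We need no control over which point outside $A$ is returned. It would also suffice to avoid the finitely many hyperplanes $\nu_h^\perp$ for $h \in H\setminus A$, but a single $h$ is enough.
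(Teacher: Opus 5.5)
Your proof is correct and is essentially the paper's argument: fix a witness $h\in H\setminus A$, observe that its $N$-component $\nu=\proj_N(h-x_0)$ is nonzero, discard the measure-zero great subsphere $\{w\in\Sph(N):\langle w,\nu\rangle=0\}$ (empty when $\dim N=1$), and pick the sign $\sigma$ so that $\sigma\bm w$ strictly prefers $h$ to every point of $A$. Your remark that the strict inequality makes the conclusion hold under adversarial tie-breaking is exactly the point the paper's proof relies on.
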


\begin{proof}
Fix $z\in H\setminus A$ and let $p=\proj_N(z-x_0)$. Since $A=x_0+L$, the assumption that $z\notin A$ implies $p\ne0$.
Therefore, the set
\[
    E_p=\{w\in\Sph(N):\langle w,p\rangle=0\}
\]
has measure zero under the uniform distribution over $\Sph(N)$.  (Note that in the
one-dimensional case, $\dim(N)=1$, we have that $E_p$ is empty.)

Now fix $w\in\Sph(N)\setminus E_p$, and choose
$\sigma\in\{-1,+1\}$ so that
$\sigma\langle w,p\rangle>0$.  For every $x\in A$, using $w\perp L$ and \Cref{eq:x0}, we have
\begin{align*}
    \langle \sigma w,z\rangle-\langle \sigma w,x\rangle =
    \sigma\langle w,z-x_0\rangle =
    \sigma\langle w,\proj_N(z-x_0)\rangle =
    \sigma\langle w,p\rangle > 0.
\end{align*}
Therefore, the query $\sigma w$ is strictly closer to $z$ than to all $x \in H \cap A$ and so the oracle must return a point from $H \setminus A$.  In particular, at least one of the two queries $\bm{w},-\bm{w}$ returns a point
outside $A$, except if $\bm{w} \in E_p$, which occurs with probability zero.
\end{proof}

\noindent \begin{proofof}{\Cref{lem:dim-reduction}} We now use \Cref{lem:random-test} to prove \Cref{lem:dim-reduction}. We will argue that the following process uses $2k+3$ queries and with probability $1$, returns a basis $u_1,\ldots,u_{k+1} \in H$ for the affine hull, $\aff(H)$. We repeat the following. Initialize the basis $X \gets \{x_0\}$ where $x_0$ is the oracle response on an arbitrarily chosen initial query in $\Sph^{d-1}$.

\begin{enumerate}
    \item Let $L=\mathsf{span}(\{x-x_0:x\in X\})$, $A=\aff(X)=x_0+L$, and $N=L^\perp$.
    \item If $N=\{0\}$, halt. Otherwise, draw $\bm{w}$ uniformly from $\Sph(N)$, query both $\bm{w}$ and $-\bm{w}$, and add both returned points to $X$.
    \item If both returned points belong to $A$, halt. Otherwise, return to step (1).
\end{enumerate}

If step (3) does not terminate, at least one newly returned point lies outside the old affine hull.  The affine dimension therefore increases by at least one.  Since $X\subseteq H$ and $H$ has rank $k$, this can happen at most $k$ times.  There is at most one final
iteration in which both answers lie in the old affine hull.  Including the initial query, this process therefore uses at most $2k+3$ queries. \end{proofof}

\section*{Acknowledgments}

\paragraph{AI Acknowledgment.} Some proof ideas in this paper were discovered through interactions with ChatGPT-5.6-Sol-Pro: the simple proof of \Cref{lem:tangent-cover}, the idea of using the cone in \Cref{thm:lower-bound-cone}, and the randomized dimension reduction over the sphere in \Cref{lem:dim-reduction}. All proofs and exposition were written by the authors.

\bibliographystyle{alpha}
\bibliography{biblio}

\appendix

\section{Bounding Spherical Cap Surface Area} \label{sec:spherical-cap}

In this section we give a complete proof of \Cref{lem:spherical-cap}, restated here for convenience.

\sphericalcap*

\begin{proof} If $\tau=1$, then $A=\{e_1\}$ has zero $(d-1)$-dimensional volume, so the result is immediate. We may assume that $0<\tau<1$. The surface volume of the unit sphere $\mathbb{S}^{d-1}$ is given by \cite[
  \href{https://dlmf.nist.gov/5.19.E4}{Eq.~(5.19.4)}]{NIST:DLMF}
\[\vol_{d-1}(\mathbb{S}^{d-1}) = \frac{2\pi^{d/2}}{\Gamma(d/2)},\]
where $\Gamma$ is the gamma function. For the spherical cap, we parametrize it using polyspherical coordinates 
\[A = \big\{(\cos \phi, \sin \phi \cdot u) : \tau \leq \cos \phi \leq 1 \textrm{ and } u \in \mathbb{S}^{d-2}\big\}.\]
The $(d-1)$-dimensional volume of the spherical cap is
\begin{align*}
    \vol_{d-1}(A) &= \int_0^{\cos^{-1} \tau} \vol_{d-2}\big(\sin \phi \cdot \mathbb{S}^{d-2}\big)\, d\phi
    \\&= \vol_{d-2}(\mathbb{S}^{d-2}) \int_0^{\cos^{-1} \tau} \sin^{d-2} \phi \, d\phi
    \\&= \frac{2 \pi^{(d-1)/2}}{\Gamma((d-1)/2)} \int_\tau^{1\vphantom{\cos^{-1}}} (1 - t^2)^{(d-3)/2}\, dt,
\end{align*}
where in the last line, we used a change of coordinates $t = \cos \phi$ with $dt = -\sin \phi \, d\phi$, along with the fact that $\sin \phi = (1 - t^2)^{1/2}$. Since $t\geq\tau$ on the interval of integration,
\begin{align*}
    \int_\tau^1 (1-t^2)^{(d-3)/2}\,dt
    &\leq
    \frac{1}{\tau}
    \int_\tau^1 t(1-t^2)^{(d-3)/2}\,dt \\
    &=
    \frac{1}{\tau(d-1)}
    (1-\tau^2)^{(d-1)/2}.
\end{align*}
The gamma function is log-convex, which implies that
\[\Gamma(d/2)^2 \leq \Gamma((d-1)/2)\Gamma((d+1)/2) = \frac{d-1}{2}\Gamma((d-1)/2)^2,\]
and hence
\[\frac{\Gamma(d/2)}{\Gamma((d-1)/2)} \leq \sqrt{\frac{d-1}{2}}.\]
Combining these inequalities yields
\[
    \frac{\vol_{d-1}(A)}
         {\vol_{d-1}(\mathbb{S}^{d-1})}
    \leq 
    \frac{\Gamma(d/2)}{\Gamma((d-1)/2)} \frac{(1 - \tau^2)^{(d-1)/2}}{\tau (d - 1)\sqrt{\pi}}
    \leq
    \frac{1}{\tau\sqrt{2\pi(d-1)}}
    (1-\tau^2)^{(d-1)/2}.
\]
Because $2(d-1)\geq d$ for $d\geq2$, this is at most
\[
    \frac{1}{\tau\sqrt{\pi(1-\tau^2)}}
    \sqrt{\frac{1}{d}}
    (1-\tau^2)^{d/2}.
\]
Thus, for $0<\tau<1$, one may take
\[
    c_\tau
    =
    \frac{1}{\tau\sqrt{\pi(1-\tau^2)}},
\]
and this completes the proof. \end{proof}

\end{document}